\documentclass[pra,twocolumn,floatfix,superscriptaddress,reprint,notitlepage,longbibliography]{revtex4-1}

\usepackage[T1]{fontenc}
\usepackage[utf8]{inputenc}
\usepackage{amsmath,amssymb,amsthm,mathtools,mathrsfs,bm}
\usepackage{enumitem}
\usepackage{multirow}
\usepackage{nicefrac}
\usepackage{xspace}
\usepackage{centernot}
\usepackage[colorlinks=true,citecolor=magenta,linkcolor=magenta,urlcolor=magenta]{hyperref}

\usepackage{graphicx,xcolor}
\usepackage{float}
\usepackage{ragged2e}
\usepackage{caption}
\usepackage{subcaption}

\DeclareCaptionJustification{fulljustified}{\justifying}

\def\bra#1{{\left\langle #1 \right|}}
\def\ket#1{{\left| #1 \right\rangle}}
\def\sbra#1{{\langle\!\langle #1 |}}
\def\sket#1{{| #1 \rangle\!\rangle}}
\newcommand{\ketbra}[2]{| #1 \rangle \langle #2 |}
\newcommand{\norm}[1]{\left\lVert #1 \right\rVert}
\newcommand{\Tr}{\operatorname{Tr}}
\newcommand{\tr}{\operatorname{tr}}
\newcommand{\Span}{\operatorname{Span}}
\usepackage{txfonts}

\newcommand{\id}{\mathbb{I}}
\newcommand{\CRB}{Clifford RB\xspace}
\newcommand{\DRB}{direct RB\xspace}
\newcommand{\FOGI}{FOGI\xspace}
\newcommand{\Fvec}{\bm{F}}
\newcommand{\Qmat}{\bm{Q}}
\newcommand{\QHmat}{\bm{Q}_H}
\newcommand{\Lmat}{\mathscr{L}}

\newtheorem{theorem}{Theorem}
\newtheorem{lemma}{Lemma}
\newtheorem{proposition}{Proposition}

\begin{document}

\title{The Impact of Markovian Errors on Random Circuits}
\author{Anthony M. Polloreno}
\email{ampolloreno@gmail.com}
\affiliation{JILA and University of Colorado, Boulder, Colorado 80309, USA}
\affiliation{Quantum Performance Laboratory, Sandia National Laboratories, Livermore, California 94550, USA}
\author{Timothy Proctor}
\affiliation{Quantum Performance Laboratory, Sandia National Laboratories, Livermore, California 94550, USA}
\author{Robin Blume-Kohout}
\affiliation{Quantum Performance Laboratory, Sandia National Laboratories, Livermore, California 94550, USA}

\begin{abstract}
Randomized benchmarking (RB) returns a single decay parameter or gate error rate, but when the noise is gate-dependent that parameter is notoriously difficult to interpret. We study RB and related random-circuit benchmarks under small gate-local Markovian noise by factorizing each noisy gate into its ideal action followed by the exponential of a small error generator and then perturbing the RB $\Lmat$ matrix, which models the RB decay. For any low-error gate set that satisfies the requirements of RB (it is a sequence-asymptotic unitary 2-design), we show that Pauli-stochastic error rates contribute linearly to the RB decay parameter, while Hamiltonian error rates contribute through a gauge-invariant quadratic form. Pauli-correlation and active generators are absent from the linear term and enter only at higher order. For the single-qubit native gate set $\{X_{\pi/2},Y_{\pi/2}\}$ we derive this quadratic form explicitly in terms of three first-order gauge-invariant (\FOGI) coordinates: the two intrinsic over-rotations and one relational axis-tilt error. The resulting formula explains the dependence of \DRB on the sampling distribution and the compilation dependence of \CRB. These results turn RB decay rates into interpretable functions of experimentally meaningful elementary error rates.
\end{abstract}

\maketitle

\section{Introduction}\label{sec:intro}
Randomized benchmarking (RB) \cite{Knill2008,Magesan2011,Magesan2012,Hashim2025} and its descendants are among the standard tools for characterizing quantum processors. In its most common form, RB compresses the behavior of a gate set into a single decay parameter, or equivalently an error rate $r$, extracted from families of random circuits. That single number is routinely used to drive closed-loop gate calibration and tune-up \cite{Kelly2014,Rol2017,WallmanCoherence2015,Sheldon2016}, to cross-check tomographic and model-based characterizations \cite{Merkel2013,BlumeKohout2017,Nielsen2021,Madzik2022}, and to summarize hardware performance for larger-scale circuit diagnostics \cite{Erhard2019,Proctor2021,Hashim2025}.

Over the last decade, RB has been generalized in many directions. Representative examples include interleaved, simultaneous, correlated, loss, leakage, and coherence-sensitive variants of \CRB \cite{MagesanInterleaved2012,Gambetta2012Simultaneous,McKay2020,WallmanLoss2015,Chasseur2015,Wood2018,WallmanCoherence2015,Feng2016,Sheldon2016}, benchmarking protocols for non-Clifford or otherwise restricted gate sets \cite{CarignanDugas2015,Cross2016,BrownEastin2018,Hashagen2018,Helsen2019,Claes2021,Helsen2022General,Helsen2022Matchgate,FrancaHashagen2018}, and scalable native-gate protocols such as direct, mirror, and binary RB \cite{Proctor2019,Polloreno2025,Hines2023,Hines2024,Mayer2021,Heinrich2022}. Statistical, drift, and non-Markovian aspects of RB have also been studied extensively \cite{WallmanFlammia2014,Epstein2014,Granade2015,Hincks2018,Harper2019,Fogarty2015,Mavadia2018,Proctor2020}.

Despite this success, interpreting the reported RB error rate remains subtle. In the gate-independent noise model analyzed in the original RB theory \cite{Magesan2011,Magesan2012}, every implemented gate is the ideal gate followed by the same error channel $\mathcal E$, and the RB error rate is simply the infidelity of $\mathcal E$ \cite{Nielsen2002}. Real devices, however, exhibit gate-dependent noise. In that setting the naive identification of the RB error rate with an average gate infidelity fails \cite{Proctor2017,Wallman2018}. The difficulty is fundamental; the average infidelity of a gate set is gauge dependent, so it cannot be what RB measures \cite{Proctor2017,Nielsen2021,Madzik2022,GonzalezDeMendoza2026}.

A theory already exists for RB with gate-dependent noise. Exact or asymptotically exact formulas can be written using the $\Lmat$ matrix (reviewed later in this paper), Fourier analysis on groups, or closely related superchannel constructions \cite{Proctor2017,Wallman2018,Merkel2021,Polloreno2025}. Related work connects RB decay parameters to gate-set circuit fidelity and other gauge-invariant figures of merit \cite{CarignanDugas2018,Boone2019}. Those results showed that, under broad conditions, RB is reliable---its error rate is both well-defined, because RB success probabilities decay exponentially, and also well-motivated. However, these theories do not explain how specific errors, e.g., a coherent over-rotation error, impact $r$.

The goal of this paper is to supply such an interpretation, in the perturbative regime of small gate-local Markovian noise. We factor each noisy gate into its ideal action followed by the exponential of a small error generator expressed in the Lindbladian basis of Hamiltonian, Pauli-stochastic, Pauli-correlation, and active sectors \cite{BlumeKohout2022}. We use the FOGI framework of Gonzalez de Mendoza \emph{et al.} \cite{GonzalezDeMendoza2026} to identify the gauge-invariant coordinate layer, and then study the RB $\Lmat$ matrix for sequence-asymptotic unitary 2-designs (gate sets for which the second moment of an i.i.d.\ gate sequence converges to the Haar second moment), a setting that encompasses standard Clifford RB, direct RB, and closely related random-circuit benchmarks \cite{Proctor2019,Polloreno2025,Hines2023,Hines2024}. Conceptually, this paper adds a perturbative interpretation layer on top of the $\Lmat$-matrix and sequence-asymptotic-unitary-2-design framework developed in Ref.~\cite{Polloreno2025}. This viewpoint is closely connected to twirling and unitary-design constructions \cite{Emerson2005,Emerson2007,Gross2007,Dankert2009} and to the broader random-circuit literature on design growth and operator spreading \cite{BrownFawzi2015,Brandao2016,Oszmaniec2022,HunterJones2018,Keyserlingk2018,Nahum2018}.

Our main results are as follows. First, for every fixed sampling distribution with the required second-moment gap, we show that, in a distribution-dependent neighborhood of the ideal gate set, Pauli-stochastic error generators are the only generators that contribute linearly to the RB decay parameter. Second, Hamiltonian generators contribute through a gauge-invariant quadratic form, so the coherent contribution to the RB error rate depends only on gauge-invariant combinations of Hamiltonian error rates and can be expressed in FOGI coordinates \cite{Madzik2022,GonzalezDeMendoza2026}. Third, for the native one-qubit gate set $\{X_{\pi/2},Y_{\pi/2}\}$ we derive this quadratic form explicitly in two intrinsic over-rotation coordinates and one pairwise relational axis-tilt coordinate. These formulas explain both the sampling dependence of direct RB and the compilation dependence of Clifford RB. Our results complement noise-tailoring approaches such as Pauli-frame randomization and randomized compiling \cite{Knill2005,Wallman2016,Ware2021,Hashim2021, polloreno2022robustly}. Instead of averaging coherent errors away, we identify how their gauge-invariant components contribute to the RB decay rate.

\begin{figure}[t!]
\includegraphics[width=0.98\linewidth]{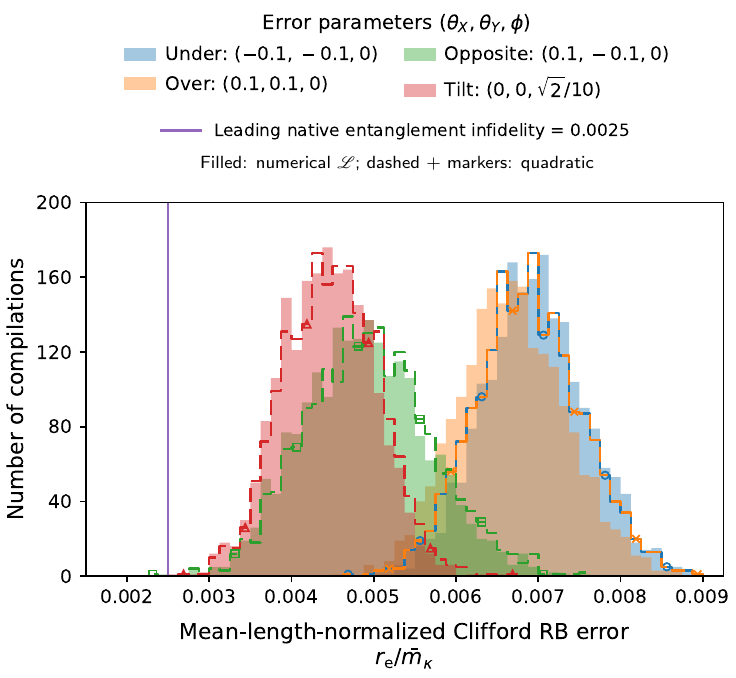}
\centering
\caption{How one-qubit \CRB error rates depend on compilation for four coherent native-gate error models. Filled histograms use the exact compiled-Clifford $\Lmat$ matrix, while dashed step curves and markers use the quadratic formula derived in Sec.~\ref{sec:crb} and Appendix~\ref{app:crb-manual}. Each distribution shows $r_{\mathrm e}/\bar m_\kappa$ for 2000 random compilations of the 24 single-qubit Cliffords into native $X_{\pi/2}$ and $Y_{\pi/2}$ sequences of length at most five, where $\bar m_\kappa$ is the compilation's mean native-gate count. We normalize $r_{\mathrm e}$ by $\bar m_\kappa$ because it is useful for comparison; this does not imply that the normalized value is a formal native-gate error rate. The vertical line is the common leading native-gate entanglement infidelity, $\theta^2/4=\phi^2/8=0.0025$. The equal-over-rotation and equal-under-rotation models have the same quadratic prediction; their markers are offset so that both estimates are visible. Their exact distributions differ because of higher-order terms.}
\label{fig:crb_example}
\end{figure}

\section{Example: coherent gate errors in RB}\label{sec:motivation}

The problem we want to solve can be illustrated by two simple examples involving one-qubit gates with purely coherent errors. Suppose an experimentalist performs RB with a set of two single-qubit gates, $X_{\nicefrac{\pi}{2}}$ and $Y_{\nicefrac{\pi}{2}}$, whose ideal action is given by the $2\times 2$ unitaries
\begin{align}
    X_{\nicefrac{\pi}{2}} &= \exp\left(-\nicefrac{i\pi}{4} X \right) = \cos\left(\nicefrac{\pi}{4}\right)\id - i\sin\left(\nicefrac{\pi}{4}\right)X, \label{eq:Xideal}\\
    Y_{\nicefrac{\pi}{2}} &= \exp\left(-\nicefrac{i\pi}{4} Y \right) = \cos\left(\nicefrac{\pi}{4}\right)\id - i\sin\left(\nicefrac{\pi}{4}\right)Y, \label{eq:Yideal}
\end{align}
but which \textit{actually} act as
\begin{align}
    \widetilde{X}_{\nicefrac{\pi}{2}}(\theta_X,\phi) = \exp[\nonumber
    -\nicefrac{i}{2}&\left(\nicefrac{\pi}{2} + \theta_X\right) \times\\&\left(\cos\left(\nicefrac{\phi}{2}\right)X + \sin\left(\nicefrac{\phi}{2}\right)Y\right)], \label{eq:Xmodel}\\
    \widetilde{Y}_{\nicefrac{\pi}{2}}(\theta_Y,\phi) = \exp[-\nicefrac{i}{2}&\left(\nicefrac{\pi}{2} + \theta_Y\right)\times\nonumber
    \\&\left(\cos\left(\nicefrac{\phi}{2}\right)Y + \sin\left(\nicefrac{\phi}{2}\right)X\right)]. \label{eq:Ymodel}
\end{align}
This error model incorporates two kinds of coherent gate error: (1) over-rotation, parameterized by $\theta_X$ and $\theta_Y$, and (2) a tilt error, parameterized by $\phi$, that changes the angle between the two gates' rotation axes. When $\theta_X=\theta_Y=\phi=0$, both gates are error free, and we expect RB experiments to yield $r=0$.

The behavior of $r$ for these gates can be explored either by exact calculations from the $\Lmat$ matrix (Sec.~\ref{sec:error-factorization}) or, in the small-error regime, by the quadratic formulas derived in Secs.~\ref{sec:drb-xy} and \ref{sec:crb}. Doing so shows (Figs.~\ref{fig:crb_example} and \ref{fig:drb_example}) that $r$ depends in nontrivial and arguably surprising ways both on the parameters $\theta_X$, $\theta_Y$, and $\phi$ and on how the random circuits are constructed.

\subsection{Clifford RB}

Clifford RB \cite{Magesan2011,Magesan2012,Hashim2025} is the most common and standard form of RB, but it does not directly benchmark native gates. To perform \CRB, we must construct, or \textit{compile}, each of the 24 single-qubit Clifford operations using the available gates $X_{\nicefrac{\pi}{2}}$ and $Y_{\nicefrac{\pi}{2}}$. There is no unique way to compile the Clifford gates into native $X_{\nicefrac{\pi}{2}}$ and $Y_{\nicefrac{\pi}{2}}$ gates---infinitely many compilations are possible, and even focusing on sensible ones does not identify a unique best option. Experimenting with different compilations, we find that both the value of $r$ and its dependence on the individual gates' errors depend nontrivially on the compilation; see also the related observation by Boone \emph{et al.} \cite{Boone2019}.

We varied both the error model and the compilation of the 24 Clifford gates into native gates. In Figure~\ref{fig:crb_example}, we show results for 2000 sampled Clifford compilations. They were generated by enumerating all implementations of each one-qubit Clifford gate using sequences of at most five native gates, then randomly selecting one implementation for each Clifford gate. The four representative error models are
\begin{enumerate}[label=(\arabic*)]
    \item equal over-rotation errors on both gates ($\theta_X=0.1$, $\theta_Y=0.1$, and $\phi=0$),
    \item equal under-rotation errors on both gates ($\theta_X=-0.1$, $\theta_Y=-0.1$, and $\phi=0$),
    \item one gate under-rotates and the other over-rotates by the same amount ($\theta_X=0.1$, $\theta_Y=-0.1$, and $\phi=0$),
    \item an error in the angle between the two gates' rotation axes, with $\theta_X=0$, $\theta_Y=0$, and $\phi=\sqrt{2}\times 0.1$, chosen so that the native-gate infidelity matches the previous three models at leading order.
\end{enumerate}
For each compilation and error model, we constructed the exact noisy Clifford maps by multiplying the full noisy native-gate maps defined by Eqs.~\eqref{eq:Xmodel} and \eqref{eq:Ymodel} in circuit order, without expanding in the error parameters. We then numerically diagonalized the resulting $9\times9$ traceless-sector $\Lmat$ matrix (see Sec.~\ref{sec:error-factorization}) to obtain the eigenvalue $\gamma$ connected continuously to the ideal depolarizing mode and the rate $r_{\mathrm e}=\frac34(1-\gamma)$. The quadratic predictions were computed separately using Eq.~\eqref{eq:crb-quadratic-form} with the first-order compiled Hamiltonian vectors of Eq.~\eqref{eq:compiled-hC}. Since the average number of native gates per Clifford varies between compilations, we plot $r_{\mathrm e}/\bar m_\kappa$. This is a reporting normalization for comparing compilations, not a formal native-gate error rate.

Figure~\ref{fig:crb_example} shows that the normalized rates vary significantly across compilations. Although all four models have the same leading-order native-gate entanglement infidelity, the opposite-sign over/under-rotation model and the pure-tilt model generate significantly different histograms, while the equal-over/equal-under quadratic estimates coincide and their exact distributions differ at higher order. Unambiguously, the \CRB error rate is not a weighted average of native-gate infidelities. It is sensitive to finer, gauge-invariant structure in the coherent error model, and that sensitivity is filtered through the compilation. We are therefore motivated to ask what property of the gates Clifford RB is measuring. More precisely, how does $r$ vary with $\theta_X$, $\theta_Y$, $\phi$, and the choice of compilation?

\begin{figure}[t!]
\includegraphics[width=0.98\linewidth]{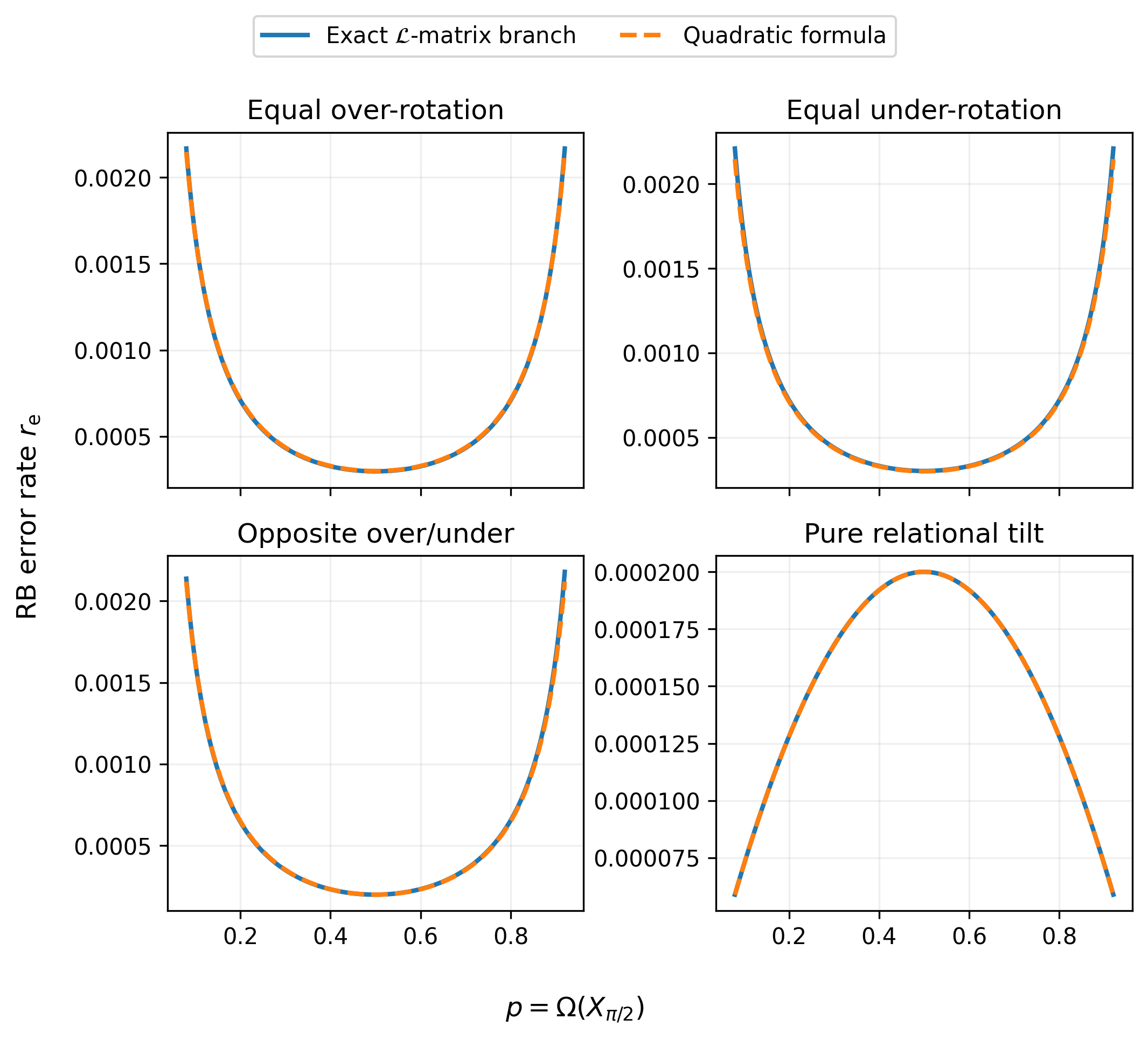}
\caption{Direct RB error rate $r_{\mathrm e}$ versus $p=\Omega_X$ for the same four coherent error models shown in Fig.~\ref{fig:crb_example}, but with all amplitudes reduced by a factor of five: $|\theta_X|=|\theta_Y|=0.02$ for the first three models and $\phi=\sqrt{2}\times0.02$ for pure tilt. These plots confirm that the \DRB error rate depends non-trivially on the sampling parameter $p$, as discussed in the main text. Parameter magnitudes were chosen to keep the model within the perturbative regime. Solid curves show the \DRB error rate predicted by the most accurate theory for \DRB \cite{Polloreno2025}, and correspond to the exact $\Lmat$-matrix eigenvalue branch connected to the ideal depolarizing mode. Dashed curves are the quadratic formulas of Sec.~\ref{sec:drb-xy}. On the displayed interval $0.08\leq p\leq0.92$, this branch is the unique dominant nontrivial mode for all four models, and any difference between the two predictions is due to higher order terms. The displayed interval is restricted because the perturbative approximation is not uniform in \(p\): at the fixed error amplitude used here, its quantitative accuracy and the isolation of the depolarizing decay mode deteriorate near the endpoints. As the error amplitude tends to zero, the interval of agreement expands toward the full open interval \(0<p<1\).}
\label{fig:drb_example}
\end{figure}

\subsection{Direct RB}

We began by examining Clifford RB because it is ubiquitous. However, the behavior of Clifford RB necessarily convolves the effect of the errors with the effect of compilation. To understand whether all the effects seen in Figure~\ref{fig:crb_example} arise from compilation, we turn instead to a simpler protocol with no compilation step: direct RB.

Direct RB benchmarks native gate sets directly by running circuits comprised almost entirely of a sequence of operations drawn independently from the native gate set \cite{Proctor2019,Polloreno2025}. The direct RB circuits also contain short subroutines at the beginning and end---essentially a random Clifford and the inverse of the full sequence---but those pieces are not important for our purposes here. The gates in the main body of the circuit may, but need not, be sampled from a uniform distribution over the gate set. Instead, direct RB is parameterized by a distribution $\Omega$. For our example, $\Omega=(\Omega_X,\Omega_Y)$, where $\Omega_X$ and $\Omega_Y$ are the probabilities of sampling $X_{\nicefrac{\pi}{2}}$ and $Y_{\nicefrac{\pi}{2}}$, respectively, and sum to 1.

Understanding the behavior of $r$ in direct RB is simpler than in Clifford RB because there is no Clifford-compilation step. For this gate set, the only free parameter is $\Omega_X=1-\Omega_Y$. A natural heuristic would be $r \approx \Omega_X\epsilon_X+\Omega_Y\epsilon_Y$, where $\epsilon_X$ and $\epsilon_Y$ are the entanglement infidelities of the two gates. This is not entirely naive---for purely stochastic errors, it is valid to leading order, and higher-order corrections are tiny in most circumstances \cite{Polloreno2025}---but it cannot be precisely correct in general because infidelity is not gauge invariant.

Figure~\ref{fig:drb_example} shows how badly this heuristic fails for coherent errors.

We computed \DRB's $r$ as a function of $\Omega_X$ along each of the four error directions defined above. To display the accuracy of the quadratic theory, Fig.~\ref{fig:drb_example} reduces every error amplitude by a factor of five relative to Fig.~\ref{fig:crb_example}. Since the two gates still have the same leading native-gate infidelity in each case, the naive weighted-infidelity heuristic suggests that all four curves should be flat (i.e., straight lines with no gradient). Instead, all four curves depend strongly on $\Omega_X$, and they are neither flat, nor linear, nor the same across the four error models.

A full explanation of these curves must wait for the theory developed in the rest of the paper, but a heuristic picture is already helpful. The first three error models are dominated by over-rotation errors that are \textit{intrinsic} to each gate in the FOGI classification of Ref.~\cite{GonzalezDeMendoza2026}. When the same gate appears several times consecutively, those errors can interfere constructively and produce a large $r$. This tends to happen when $\Omega_X$ is close to 0 or 1. When $\Omega_X\approx \nicefrac{1}{2}$, by contrast, the two gates alternate more often and effectively twirl each other's intrinsic errors before they can add up. The fourth model is the corresponding pairwise \textit{relational} tilt error. Each gate's tilted rotation axis can be aligned with its ideal axis by a change of reference frame, but no single frame simultaneously aligns both erroneous axes because their relative angle has changed. The quadratic prediction therefore tends to zero as $\Omega_X$ approaches 0 or 1 and is maximized near $\Omega_X=\nicefrac{1}{2}$. These same effects occur in \CRB, but they are more complex to describe because they depend on the details of the compilation.

\subsection{Discussion}

These simple examples demonstrate that interpreting, or even predicting, $r$ is nontrivial for both Clifford RB and direct RB. In the rest of the paper, we develop a theory for predicting and interpreting $r$. The key observation is that, in the limit of small coherent errors, and for any fixed compilation in Clifford RB or fixed sampling distribution in direct RB, $r$ is a quadratic function of $\theta_X$, $\theta_Y$, and $\phi$. This simplifies the problem greatly: we need only identify the quadratic form and then understand what its components mean.

Our first concern will therefore be to prove that $r$ is quadratic in the coherent error rates. Next, we derive an expression for the particular quadratic form of $r$. Once this is done, we turn to the more general questions of how stochastic and coherent errors combine, and how the same structure appears across different random-circuit benchmarks and different gate sets.

\section{Error models}\label{sec:error-models}

We begin by introducing the error generator formalism and gauge structure used throughout the paper.

\subsection{The small Lindbladian error basis}

In this work, we study the impact of Markovian errors on random circuits containing $n$-qubit gates from some set $\{\mathcal{G}_k\}$. Here $\mathcal{G}_k$ denotes the superoperator for an error-free unitary gate, and $\widetilde{\mathcal{G}}_k$ denotes the implemented, imperfect gate. Our gate-local Markovian model assigns each gate label a fixed completely positive and trace-preserving (CPTP) map on the modeled $n$ qubits. That map is independent of circuit history, and a circuit channel is the composition of its gate maps in circuit order. This excludes drift, context dependence, temporal correlations, and unmodeled leakage. We are concerned with how different errors affect random circuits, and RB in particular, so we need a parameterized model for the errors in a gate. For this we use the small Lindbladian error basis \cite{BlumeKohout2022}.

There are four kinds of errors in this basis: Hamiltonian (H), stochastic (S), correlation (C), and active (A). They are defined by
\begin{align}
    H_P[\rho] &= -i[P,\rho], \\
    S_P[\rho] &= P\rho P - \rho, \\
    C_{P,Q}[\rho] &= P\rho Q + Q\rho P - \frac{1}{2}\bigl\{\{P,Q\},\rho\bigr\}, \\
    A_{P,Q}[\rho] &= i\left(P\rho Q-Q\rho P + \frac{1}{2}\bigl\{[P,Q],\rho\bigr\}\right).
\end{align}
Here $\rho$ is an $n$-qubit density operator, $P,Q\in\mathbb{P}^*$, where $\mathbb{P}^*$ is the set of $n$-qubit Pauli operators without signs and excluding the identity, and $P<Q$ denotes a fixed ordering of Pauli labels.

We can express the implemented gate as the ideal gate followed by a post-gate error process,
\begin{equation}
    \widetilde{\mathcal{G}}_k = e^{L_k}\,\mathcal{G}_k,
    \label{eq:postfactor}
\end{equation}
where the generator of the error process is
\begin{align}
L_k &= \sum_P h_{P}^{(k)} H_P + \sum_P s_{P}^{(k)} S_P \notag\\
&\quad + \sum_{P<Q} c_{P,Q}^{(k)} C_{P,Q} + \sum_{P<Q} a_{P,Q}^{(k)} A_{P,Q}.
\label{eq:generator-expansion}
\end{align}
The coefficients $h_P^{(k)}$, $s_P^{(k)}$, $c_{P,Q}^{(k)}$, and $a_{P,Q}^{(k)}$ are the rates of the four kinds of elementary error on gate $k$. This is a local coordinate expansion: complete positivity constrains combinations of coefficients, and we do not assume that every error map is embeddable in a time-homogeneous Lindblad semigroup. We may therefore parameterize an imperfect gate by a vector $\vec{\epsilon}^{\,(k)}$ of error rates, and a gate set by stacking those vectors into a single parameter vector $\vec{\epsilon}$.

In this paper, we study how the RB error rate $r$ of a gate set depends on $\vec{\epsilon}$. Because each kind of error has a different effect, it is often useful to split the error parameters into the four sectors, with vectors $\vec{h}$, $\vec{s}$, $\vec{c}$, and $\vec{a}$ of Hamiltonian, stochastic, correlation, and active rates, respectively.

\subsection{Linearized gauge transformations and \FOGI coordinates}

Gate set descriptions have a similarity-transformation gauge freedom
\cite{Proctor2017,Nielsen2021,Madzik2022,GonzalezDeMendoza2026}.
Let $\rho_i$ denote a prepared density operator and $E_j$ a measurement
effect associated with outcome $j$. In Liouville notation,
$|\rho_i\rangle\rangle$ represents the prepared state and
$\langle\langle E_j|$ represents the linear functional
$X\mapsto\operatorname{tr}(E_jX)$. The probability of outcome $j$ after
a gate sequence is
\begin{equation}
    p(j\mid i,k_1,\ldots,k_m)
    =\langle\langle E_j|
      \widetilde{\mathcal G}_{k_m}\cdots
      \widetilde{\mathcal G}_{k_1}
      |\rho_i\rangle\rangle.
\end{equation}
If the preparations, measurement effects, and gates are transformed using
the same invertible map $\mathcal M$ as
\begin{equation}
\begin{aligned}
    \sket{\rho_i}
    &\to \mathcal M\sket{\rho_i},\\
    \sbra{E_j}
    &\to \sbra{E_j}\mathcal M^{-1},\\
    \widetilde{\mathcal G}_k
    &\to \mathcal M\widetilde{\mathcal G}_k\mathcal M^{-1},
\end{aligned}
\label{eq:gauge-action}
\end{equation}
then all observable probabilities are unchanged. For small errors, it is
therefore natural to ask which directions in error-generator space are
observable and which are pure gauge.

For the post-gate factorization of Eq.~\eqref{eq:postfactor}, a small
gauge transformation $\mathcal M=e^K$ induces the first-order shift
\begin{equation}
    L_k \longmapsto L_k'
    =L_k+K-\mathcal G_kK\mathcal G_k^{-1}
    +O\!\left(\|K\|\,\|L_k\|+\|K\|^2\right).
    \label{eq:lin-gauge}
\end{equation}
The exact transformed post-gate error map obeys
\[
    e^{L_k'}
    =e^K e^{L_k}e^{-\mathcal G_kK\mathcal G_k^{-1}}.
\]
Thus $K=0$ gives $L_k'=L_k$ exactly. For fixed small $L_k$, the remainder in Eq.~\eqref{eq:lin-gauge} tends to zero as $K\to0$.

A first-order gauge-invariant (FOGI) property is a linear combination of error-generator coefficients that is unchanged by every first-order gauge shift \cite{GonzalezDeMendoza2026}. Let $\mathcal H$ be the stacked Hamiltonian-rate space and $\Gamma_H$ the subspace of shifts generated by arbitrary Hamiltonian gauge generators $K$, i.e., real linear combinations of the generators $H_P$. Ideal unitary conjugation preserves the Hamiltonian sector. A complete set of Hamiltonian FOGI properties therefore assigns the same coordinates to two Hamiltonian error vectors exactly when their difference lies in $\Gamma_H$. These properties
provide coordinates on $\mathcal H/\Gamma_H$; they do not select a unique representative error vector in each equivalence class.
\subsection{\texorpdfstring{The $\{X_{\pi/2},Y_{\pi/2}\}$ gate set}{The \{Xpi/2,Ypi/2\} gate set}}

For the concrete gate model of Eqs.~\eqref{eq:Xmodel} and \eqref{eq:Ymodel}, linearizing the post-gate error generators gives
\begin{align}
    L_X &= \frac{\theta_X}{2}H_X + \frac{\phi}{4}\left(H_Y+H_Z\right) + O\left(\theta_X^2,\phi^2,\theta_X\phi\right), \label{eq:LX}\\
    L_Y &= \frac{\theta_Y}{2}H_Y + \frac{\phi}{4}\left(H_X-H_Z\right) + O\left(\theta_Y^2,\phi^2,\theta_Y\phi\right). \label{eq:LY}
\end{align}
A convenient set of Hamiltonian \FOGI coordinates, consistent with the intrinsic/relational classification of Ref.~\cite{GonzalezDeMendoza2026}, is
\begin{align}
    F_1 &= h_X^{(X)}, \label{eq:F1}\\
    F_2 &= h_Y^{(Y)}, \label{eq:F2}\\
    F_3 &= \frac{1}{2}\left(h_Y^{(X)} + h_Z^{(X)} + h_X^{(Y)} - h_Z^{(Y)}\right). \label{eq:F3}
\end{align}
Writing $\eta=\max\{|\theta_X|,|\theta_Y|,|\phi|\}$, the physical parameterization of Eqs.~\eqref{eq:Xmodel} and \eqref{eq:Ymodel} gives
\begin{equation}
\begin{aligned}
    F_1 &= \frac{\theta_X}{2}+O(\eta^2),\qquad
    F_2 = \frac{\theta_Y}{2}+O(\eta^2),\\
    F_3 &= \frac{\phi}{2}+O(\eta^2).
    \label{eq:F-map}
\end{aligned}
\end{equation}
Thus $F_1$ and $F_2$ are intrinsic over- or under-rotation coordinates. With the axis-misalignment convention of Ref.~\cite[Eqs.~(157)--(159)]{GonzalezDeMendoza2026}, the first-order axis-misalignment angle is $2F_3=\phi+O(\eta^2)$, so $F_3$ is one half of the pairwise relational tilt error.

\section{Modeling fidelity decay in random circuits}\label{sec:error-factorization}

We work in a normalized Hermitian operator basis, so ideal unitary channels are represented by real orthogonal matrices. For a sequence-asymptotic unitary 2-design, the bulk decay spectrum of the mean survival probability is governed by the $\Lmat$ matrix \cite{Proctor2017,Polloreno2025}
\begin{equation}
    \Lmat = \sum_k \Omega_k\, \mathcal{G}_k \otimes \widetilde{\mathcal{G}}_k.
    \label{eq:Lmatrix}
\end{equation}
Following Ref.~\cite{Polloreno2025}, this means that the second moment of an i.i.d. length-$m$ gate sequence converges to the Haar second moment as $m\to\infty$. 

Let
\begin{equation}
    \Lmat_0(\Omega)
    =\sum_k\Omega_k\,\mathcal G_k\otimes\mathcal G_k
\end{equation}

denote the noise-free second-moment matrix. Its $m$th power is the second-moment matrix of a length-$m$ sequence of independently sampled ideal gates. Under column vectorization, the range of the Haar second-moment projector corresponds to the superoperator subspace
\begin{equation}
    \mathcal V_1=\Span\{\Delta,\mathrm{id}\},
    \label{eq:V1}
\end{equation}
where
\begin{equation}
    \Delta(\rho)=\tr(\rho)\frac{\id}{d},
    \qquad
    \mathrm{id}(\rho)=\rho.
\end{equation}
Every ideal unitary channel fixes this subspace under conjugation. Identifying superoperators with their column vectorizations, let $\rho_2(\Omega)$ be the largest absolute value of an eigenvalue of $\Lmat_0(\Omega)$ restricted to the orthogonal complement of $\mathcal V_1$. Define
\begin{equation}
    \Delta_2(\Omega)=1-\rho_2(\Omega).
    \label{eq:design-gap}
\end{equation}
The design condition $\Delta_2(\Omega)>0$ states that every eigenvalue outside $\mathcal V_1$ has modulus strictly less than one. Consequently, $\Lmat_0(\Omega)^m$ converges to the Haar second-moment projector as $m\to\infty$, and $\mathcal V_1$ is the full ideal eigenvalue-one subspace.

The depolarizing direction is $\mathrm{id}-\Delta$. Trace preservation leaves a separate exact unit eigenvalue under perturbation, although its right eigenvector need not remain $\Delta$ for nonunital noise.

Fix $\Omega$ with $\Delta_2(\Omega)>0$. The ideal depolarizing eigenvalue is simple within the invariant traceless--traceless block of $\Lmat$, so it extends to an analytic eigenvalue branch $\gamma$ for sufficiently small errors. For errors sufficiently close to zero, $\gamma$ has larger modulus than every other nontrivial eigenvalue and governs the leading non-constant contribution to the RB survival probability. The required upper bound on the error strength depends on $\Omega$. In particular, we do not assert that a single positive error threshold works for all sampling distributions as their ideal gap approaches zero.

We use the entanglement-infidelity normalization
\begin{equation}
    r\equiv r_{\mathrm e}=\frac{d^2-1}{d^2}(1-\gamma),\qquad d=2^n.
    \label{eq:rfromgamma}
\end{equation}
The conventional average-gate-infidelity normalization \cite{Nielsen2002} is
\begin{equation}
    r_{\mathrm{avg}}=\frac{d}{d+1}r_{\mathrm e}=\frac{d-1}{d}(1-\gamma).
    \label{eq:ravgfromgamma}
\end{equation}
Thus for one qubit $r_{\mathrm e}=\frac34(1-\gamma)$ whereas $r_{\mathrm{avg}}=\frac12(1-\gamma)$. Every numerical value and figure below uses $r\equiv r_{\mathrm e}$.

Figure~\ref{fig:drb_example} compares the eigenvalue branch connected to the ideal depolarizing mode with the quadratic formulas of Sec.~\ref{sec:drb-xy}. The error amplitudes are reduced relative to Fig.~\ref{fig:crb_example} so that this branch remains the unique dominant nontrivial mode throughout the displayed interval. Equal over- and under-rotation share the same quadratic approximation, so the difference between their exact rates is a higher-order effect. At fixed nonzero error strength, taking $p$ sufficiently close to $0$ or $1$ can produce other nontrivial eigenvalues $\lambda$ of $\Lmat$ with $|\lambda|\geq|\gamma|$. If the corresponding modes have nonzero coefficients in the mean survival probability, their contributions are not asymptotically smaller than the contribution from $\gamma$. The survival probability then need not be dominated by a single exponential with decay parameter $\gamma$.

We now present our main perturbative result, relating $r$ to the rates of elementary error generators. First, we define
\begin{equation}
\delta \coloneqq
\max_{k,P,Q}
\Bigl\{|s_P^{(k)}|,|c_{P,Q}^{(k)}|,|a_{P,Q}^{(k)}|\Bigr\},
\qquad
\theta \coloneqq \max_{k,P}|h_P^{(k)}|.
\label{eq:small-parameters}
\end{equation}
Complete positivity constrains these dissipative coefficients jointly. We retain all three sectors in $\delta$ because a CPTP error map need not have a Lindblad logarithm; exact Lindbladian coefficient bounds are therefore not assumed here \cite[Sec.~IV]{BlumeKohout2022}. Our general perturbative result is the following.

\begin{theorem}\label{thm:main-perturbative}
Fix a gate set and sampling distribution $\Omega$ that is a sequence-asymptotic unitary 2-design. There is an $\Omega$-dependent neighborhood of the ideal gate set such that gate-local Markovian errors in the post-gate form of Eq.~\eqref{eq:postfactor} satisfy
\begin{equation}
    \gamma
    = 1
    - \frac{d^2}{d^2-1}\sum_{k,P}\Omega_k\,s_P^{(k)}
    - Q_H(\bm h;\Omega)
    + O(\delta^2,\delta\theta,\theta^3),
    \label{eq:main-expansion}
\end{equation}
where $Q_H(\bm h;\Omega)=\bm h^{\mathsf T}\QHmat(\Omega)\bm h$ is a quadratic form in the Hamiltonian error coefficients and $\QHmat(\Omega)$ is real symmetric. Every term contains exactly two Hamiltonian coefficients; equivalently,
\begin{equation}
    Q_H(\alpha\bm h;\Omega)=\alpha^2Q_H(\bm h;\Omega).
\end{equation}
This quadratic form is unchanged when a linearized Hamiltonian gauge shift is added to $\bm h$, so it depends only on the equivalence class of $\bm h$ in $\mathcal H/\Gamma_H$. Equivalently, for every Hamiltonian gauge generator $K$, with shift $\bm g_H(K)\in\Gamma_H$,
\begin{equation}
    \QHmat(\Omega)\,\bm g_H(K)=0.
    \label{eq:gauge-kernel-thm}
\end{equation}
The expansion is taken as $\delta,\theta\to0$ with $\Omega$ held fixed. More explicitly, the remainder has magnitude at most $C_\Omega(\delta^2+\delta\theta+\theta^3)$ for sufficiently small $\delta$ and $\theta$. Both $C_\Omega$ and the permitted sizes of $\delta$ and $\theta$ may depend on $\Omega$; no bound independent of $\Omega$ is asserted as $\Delta_2(\Omega)\to 0$.
\end{theorem}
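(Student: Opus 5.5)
We treat $\gamma$ as an isolated simple eigenvalue of $\Lmat=\Lmat_0(\Omega)+V$ and expand it with analytic (Rayleigh--Schrödinger) perturbation theory. The perturbation is
\[
V=\sum_k\Omega_k\,\mathcal G_k\otimes(e^{L_k}-\mathrm{id})\mathcal G_k .
\]
Restricted to the traceless--traceless block, the design condition $\Delta_2(\Omega)>0$ makes the ideal depolarizing eigenvalue $1$ simple, with eigenvector $v_0\propto \mathrm{id}-\Delta$. The rest of that block's spectrum has modulus at most $1-\Delta_2(\Omega)$. Kato's theory then gives an analytic branch
\[
\gamma=1+\langle v_0,Vv_0\rangle+\langle v_0,VR\,Vv_0\rangle+O(\|V\|^3),
\]
where $R=(1-\Lmat_0)^{-1}$ is the reduced resolvent on the complement of $v_0$. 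The norm $\|R\|$ and the radius of convergence are controlled by $\Delta_2(\Omega)$. This is where the $\Omega$-dependence of $C_\Omega$ and of the neighborhood enters. Dominance of $\gamma$ over the other nontrivial eigenvalues also follows from continuity of the spectrum for $\|V\|$ small relative to $\Delta_2$.

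\textbf{First order.} Since $v_0$ is fixed by ideal conjugation, the first-order term is
\[
\langle v_0,Vv_0\rangle=\sum_k\Omega_k\,\langle v_0,\mathrm{id}\otimes(e^{L_k}-\mathrm{id})\,v_0\rangle .
\]
Up to normalization this is the trace of $e^{L_k}-\mathrm{id}$ on traceless operators, divided by $d^2-1$. Expanding $e^{L_k}=\mathrm{id}+L_k+\tfrac12L_k^2+\dots$, the linear term is $\Tr(L_k)$ restricted to the traceless sector, and we evaluate it in the Pauli basis generator by generator:
\begin{itemize}[label={}]
\item $H_P$ is antisymmetric, so its trace vanishes.
\item $C_{P,Q}$ and $A_{P,Q}$ have zero diagonal in the Pauli basis, since $P\sigma Q$ is never proportional to $\sigma$ when $P\neq Q$; their traces vanish.
\item $S_P$ acts on a Pauli $\sigma$ as $-2\sigma$ if $\sigma$ anticommutes with $P$ and as $0$ otherwise. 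Exactly $d^2/2$ Paulis anticommute with $P$, so the trace is $-d^2$.
\end{itemize}
Dividing by $d^2-1$ gives the term $-\frac{d^2}{d^2-1}\sum\Omega_k s_P^{(k)}$. The quadratic remainders of $e^{L_k}$ are $O(\delta^2,\delta\theta,\theta^2)$. The $\theta^2$ piece, $\tfrac12\Tr(L_H^{(k)2})$, is kept as a contribution to $Q_H$.

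\textbf{Second order and symmetry of $\QHmat$.} In the second-order term only Hamiltonian–Hamiltonian pieces survive at the stated order; mixed pieces are $O(\delta\theta)$ and dissipative pieces $O(\delta^2)$. Third-order terms and $e^{L}$ cubic terms are $O(\theta^3)$. So $Q_H$ is the sum of two pieces:
\[
Q_H=-\sum_k\tfrac12\Omega_k\langle v_0,\mathrm{id}\otimes L_{H,k}^2v_0\rangle-\langle v_0,V_HRV_Hv_0\rangle ,
\]
where $V_H$ is linear in $\bm h$. This is manifestly a homogeneous quadratic form in $\bm h$, with $\QHmat$ its symmetrized real coefficient matrix; realness holds because all maps are real in the Hermitian basis.

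\textbf{Main obstacle: gauge invariance.} We plan to argue exactly rather than by brute force. The spectrum of $\Lmat$ is invariant under the exact gauge map $\widetilde{\mathcal G}_k\mapsto e^K\widetilde{\mathcal G}_ke^{-K}$, because
\[
\Lmat\mapsto(\mathrm{id}\otimes e^K)\,\Lmat\,(\mathrm{id}\otimes e^{-K}),
\]
which is a similarity transform. Hence $\gamma(\bm h)=\gamma(\bm h')$, where $\bm h'$ is the exactly transformed generator. For Hamiltonian $K$ and purely Hamiltonian $\bm h$, the shifted generator of Eq.~\eqref{eq:lin-gauge} is $\bm h+\bm g_H(K)+O(\|K\|\theta+\|K\|^2)$. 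The correction terms are Hamiltonian via BCH, since commutators of Hamiltonian generators stay Hamiltonian (and $e^K$, $\mathcal G_k$ are unitary). Thus $\gamma$ restricted to unitary gate sets is constant along gauge orbits.

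Take $\bm h=t\bm u$ and $K=t\kappa$ and compare coefficients of $t^2$: $Q_H(\bm u)=Q_H(\bm u+\bm g_H(\kappa))$ for all $\bm u,\kappa$. Because the orbit correction is $O(t^2)$, its contribution to $Q_H$ is $O(t^3)$. The linear term vanishes identically on the Hamiltonian sector, so it contributes nothing. Polarizing the identity and using that $\bm g_H(\kappa)$ can be scaled independently of $\bm u$ gives $\QHmat\,\bm g_H(K)=0$, i.e., Eq.~\eqref{eq:gauge-kernel-thm}.

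The delicate part is making this order counting rigorous. The plan is to apply analyticity of $\gamma$ jointly in $(\bm h,K)$, using that the BCH remainder is analytic. Then we differentiate the identity $\gamma(\bm h)=\gamma(\bm h'(\bm h,K))$ twice at the origin.
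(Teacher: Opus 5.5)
Your proposal is correct. Its gauge-invariance half is the paper's argument: exact similarity $(I\otimes\mathcal M)\Lmat(I\otimes\mathcal M^{-1})$, joint scaling of $\bm h$ and $K$, comparison of second-order coefficients, and polarization to get $\QHmat\,\bm g_H(K)=0$. The $O(t^2)$ orbit correction enters only at third order because the linear Hamiltonian term vanishes. Your BCH remark is the point the paper makes by noting that Hamiltonian gauge generators keep the gates unitary.

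The difference is in the linear term. The paper works on the full superoperator space, where the ideal eigenvalue one is doubly degenerate on $\mathcal V_1=\Span\{\Delta,\mathrm{id}\}$. It therefore uses degenerate perturbation theory, with the Hilbert--Schmidt projector computed from Choi pairings and a Gram-matrix inversion (Lemma~\ref{lem:choi-pairings}, Proposition~\ref{prop:choi-projector}). That route shows both that the trace-preserving branch does not move and that the $\mathrm{id}-\Delta$ branch moves by $\Tr(\bar g)/(d^2-1)$. You instead restrict to the invariant traceless--traceless block, where the depolarizing eigenvalue is simple, and read off the same shift directly as a normalized trace. This is more elementary and skips the Choi machinery. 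Its cost is that it says nothing about the trace-preserving branch, which the theorem does not need.

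Your Pauli-diagonal evaluation of the generator traces is equivalent to the paper's use of $\Tr(A\otimes B)=\tr A\,\tr B$ on the Liouville forms. Your explicit expression for $Q_H$, namely the $\tfrac12 L_H^2$ term plus the reduced-resolvent term, makes realness, symmetry, and exact degree two manifest. The paper's main text instead infers the quadratic form from analyticity and the vanishing linear term.

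One step you use silently and should state. Writing the coefficients as $\langle v_0,\cdot\,v_0\rangle$ with a resolvent on $v_0^\perp$ requires $v_0$ to be a left eigenvector too, so that the spectral projector is orthogonal. This holds because each ideal $\mathcal G_k\otimes\mathcal G_k$ is real orthogonal, so its transpose also fixes $v_0$. The paper gives the analogous invariance argument explicitly.
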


An immediate corollary of Theorem~\ref{thm:main-perturbative} and Eq.~\eqref{eq:rfromgamma} is the corresponding expansion for the reported RB error rate:
\begin{equation}
\begin{aligned}
    r\equiv r_{\mathrm e}
    &=\sum_{k,P}\Omega_k\,s_P^{(k)}
      +\frac{d^2-1}{d^2}\,Q_H(\bm h;\Omega)\\
    &\quad+O(\delta^2+\delta\theta+\theta^3).
\end{aligned}
\label{eq:r-main-expansion}
\end{equation}
The expansion is at fixed $\Omega$, and its remainder constants and permitted error magnitudes may depend on $\Omega$.

The remainder of this section proves the linear stochastic term and then shows how analyticity and gauge invariance force the structure of the quadratic Hamiltonian term.

\subsection{Only stochastic generators contribute linearly}
We first compute the first-order splitting of the ideal eigenvalue-one subspace $\mathcal V_1=\Span\{\Delta,\mathrm{id}\}$ defined in Eq.~\eqref{eq:V1}. Write
\begin{equation}
    \widetilde{\mathcal G}_k
    =\left(\id+\varepsilon g_k+O(\varepsilon^2)\right)\mathcal G_k,
\end{equation}
where each $g_k$ is a linear combination of the elementary generators. Under column vectorization, $\Lmat$ in Eq.~\eqref{eq:Lmatrix} is the matrix representation of the map on superoperators
\begin{equation}
    \widehat{\Lmat}(X)
    =\sum_k\Omega_k\,
      \widetilde{\mathcal G}_k\circ X\circ\mathcal G_k^{-1}.
    \label{eq:Lhat}
\end{equation}
Every $X\in\mathcal V_1$ is a linear combination of the identity and completely depolarizing superoperators, both of which commute with every ideal unitary channel. Hence, with $\bar g=\sum_k\Omega_k g_k$, the first-order perturbation acts as
\begin{equation}
    \widehat{\Lmat}_1(X)=\bar g\circ X,
    \qquad X\in\mathcal V_1.
    \label{eq:L1-slow-action}
\end{equation}
Each ideal conjugation
$X\mapsto\mathcal G_k\circ X\circ\mathcal G_k^{-1}$ is orthogonal in the Hilbert--Schmidt inner product and fixes $\mathcal V_1$ pointwise. Its adjoint therefore also fixes $\mathcal V_1$, and $\mathcal V_1^\perp$ is invariant under both the ideal moment operator and its adjoint. Since the spectrum on this complement lies strictly inside the unit disk, the spectral projector at eigenvalue one is precisely the Hilbert--Schmidt orthogonal projector $P_{\mathcal V_1}$, equal to the Haar second-moment twirl. Thus degenerate perturbation theory reduces the first-order splitting to $X\mapsto P_{\mathcal V_1}(\bar g\circ X)$.

We compute that projector explicitly with the Choi--Jamio\l{}kowski
isomorphism \cite{Choi1975,Jamiolkowski1972}.
 For any superoperator $\Phi$, let
\begin{equation}
    J(\Phi)=\sum_{i,j=0}^{d-1}\ketbra{i}{j}\otimes\Phi\!\left(\ketbra{i}{j}\right),
\end{equation}
and equip superoperators with the Hilbert--Schmidt inner product
\begin{equation}
    \langle \Phi,\Psi\rangle_{\mathrm{HS}}
    = \tr\!\left[J(\Phi)^\dagger J(\Psi)\right].
\end{equation}
The Choi matrices of $\Delta$ and $\mathrm{id}$ are
\begin{equation}
\begin{aligned}
    J(\Delta)&=\frac{\id\otimes\id}{d},\\
    J(\mathrm{id})&=d\,\ket{\Phi^+}\!\bra{\Phi^+},\\
    \ket{\Phi^+}&=\frac{1}{\sqrt d}\sum_{k=0}^{d-1}\ket{kk}.
\end{aligned}
\end{equation}

\begin{lemma}[Choi pairings for $\Delta$ and $\mathrm{id}$]\label{lem:choi-pairings}
For every Hermiticity-preserving superoperator $\Phi$,
\begin{align}
    \langle \Phi,\Delta\rangle_{\mathrm{HS}}
    &= \frac{1}{d}\tr\!\left[\Phi(\id)\right],\\
    \langle \Phi,\mathrm{id}\rangle_{\mathrm{HS}}
    &= \Tr(\Phi),
\end{align}
where $\Tr(\Phi)$ denotes the trace of $\Phi$ as a linear map on operator space.
\end{lemma}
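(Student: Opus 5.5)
The plan is a direct computation from the Choi representation. By definition, $\langle \Phi,\Psi\rangle_{\mathrm{HS}}=\tr[J(\Phi)^\dagger J(\Psi)]$, and we already have the explicit Choi matrices $J(\Delta)=(\id\otimes\id)/d$ and $J(\mathrm{id})=\sum_{i,j}\ketbra{i}{j}\otimes\ketbra{i}{j}$. Both pairings should then follow by substituting these matrices and contracting indices. The Hermiticity-preserving hypothesis is used only to handle the adjoint in the first slot. For such $\Phi$ one has $J(\Phi)^\dagger=J(\Phi)$, since $J(\Phi)^\dagger=\sum_{i,j}\ketbra{j}{i}\otimes\Phi(\ketbra{i}{j})^\dagger=\sum_{i,j}\ketbra{j}{i}\otimes\Phi(\ketbra{j}{i})$. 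Consequently, $\langle\Phi,\Psi\rangle_{\mathrm{HS}}=\tr[J(\Phi)J(\Psi)]$, and no complex conjugates remain.

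For the first identity, write $\tr[J(\Phi)\,(\id\otimes\id)/d]=\frac1d\tr J(\Phi)$. Then compute $\tr J(\Phi)=\sum_{i,j}\tr(\ketbra{i}{j})\,\tr\Phi(\ketbra{i}{j})=\sum_i\tr\Phi(\ketbra{i}{i})=\tr\Phi(\id)$, using linearity of $\Phi$. This gives $\langle\Phi,\Delta\rangle_{\mathrm{HS}}=\frac1d\tr[\Phi(\id)]$.

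For the second identity, compute $\tr[J(\Phi)J(\mathrm{id})]=\sum_{i,j,k,l}\tr\bigl[\ketbra{i}{j}\ketbra{k}{l}\bigr]\,\tr\bigl[\Phi(\ketbra{i}{j})\ketbra{k}{l}\bigr]$. The first factor equals $\delta_{jk}\delta_{li}$, which collapses the sum to $\sum_{i,j}\bra{j}\Phi(\ketbra{i}{j})\ket{i}$. This needs to be matched with the trace of $\Phi$ as a linear map on operator space. In the matrix-unit basis $E_{ij}=\ketbra{i}{j}$, which is orthonormal for $\langle A,B\rangle=\tr(A^\dagger B)$, we have $\Tr(\Phi)=\sum_{i,j}\tr[E_{ij}^\dagger\Phi(E_{ij})]=\sum_{i,j}\bra{i}\Phi(\ketbra{i}{j})\ket{j}$.

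The main obstacle is that these two expressions have their indices in opposite orders, $\bra{j}\cdot\ket{i}$ versus $\bra{i}\cdot\ket{j}$, and this is exactly where Hermiticity preservation has to be used a second time. Since $\Phi(\ketbra{i}{j})^\dagger=\Phi(\ketbra{j}{i})$, we get $\bra{j}\Phi(\ketbra{i}{j})\ket{i}=\overline{\bra{i}\Phi(\ketbra{j}{i})\ket{j}}$. Relabeling $i\leftrightarrow j$ therefore shows that the Choi pairing equals $\overline{\Tr(\Phi)}$. Finally, $\Tr(\Phi)$ is real for a Hermiticity-preserving map, because such a map is represented by a real matrix in a Hermitian operator basis (the paper's normalized Pauli basis). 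Hence the pairing equals $\Tr(\Phi)$, which completes the lemma. A useful sanity check is $\Phi=\mathrm{id}$, which gives $d^2$ on both sides.
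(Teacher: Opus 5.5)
Your first identity is correct, and so is your observation that Hermiticity preservation gives $J(\Phi)^\dagger=J(\Phi)$. Your overall route, an explicit contraction in matrix units rather than the paper's comparison in a Hermitian orthonormal basis, is essentially the paper's. The second identity, however, is not proved as written. It contains an index slip followed by an invalid repair, and the two errors happen to land on the right answer.

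The slip is in the collapse. The factor $\tr[\ketbra{i}{j}\ketbra{k}{l}]=\delta_{jk}\delta_{li}$ forces $\ketbra{k}{l}=\ketbra{j}{i}$. The surviving term is therefore $\tr[\Phi(\ketbra{i}{j})\ketbra{j}{i}]=\bra{i}\Phi(\ketbra{i}{j})\ket{j}$, not $\bra{j}\Phi(\ketbra{i}{j})\ket{i}$. With the correct contraction the Choi pairing is literally $\sum_{i,j}\bra{i}\Phi(\ketbra{i}{j})\ket{j}=\Tr(\Phi)$ in the matrix-unit basis, and the ``main obstacle'' disappears.

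The repair also fails on its own terms. The sum you wrote, $S=\sum_{i,j}\bra{j}\Phi(\ketbra{i}{j})\ket{i}$, is the trace of $\Phi$ followed by transposition, which is a genuinely different quantity. Hermiticity preservation turns the pairing ``input $\ketbra{i}{j}$, output entry $(j,i)$'' into ``input $\ketbra{j}{i}$, output entry $(i,j)$''. That is the same transposed pairing, so conjugating and relabeling $i\leftrightarrow j$ only shows $S=\overline{S}$, not $S=\overline{\Tr(\Phi)}$.

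Your own sanity check exposes this. For $\Phi=\mathrm{id}$ you get $S=\sum_{i,j}\delta_{ij}=d$, whereas $\Tr(\mathrm{id})=d^2$. So apply the check to the intermediate expression, not just to the claimed conclusion.

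To fix the proof, correct the contraction and drop the second use of Hermiticity. Reality of $\Tr(\Phi)$ is needed only if you keep $J(\Phi)^\dagger$ instead of replacing it by $J(\Phi)$. In that case the same contraction gives $\overline{\Tr(\Phi)}$ directly, and reality finishes the argument.
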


\begin{proof}
Hermiticity preservation makes $J(\Phi)$ Hermitian. Using $J(\Delta)=\id\otimes\id/d$ gives
\begin{equation}
    \langle \Phi,\Delta\rangle_{\mathrm{HS}}
    = \frac{1}{d}\tr[J(\Phi)]
    = \frac{1}{d}\tr\!\left[\Phi(\id)\right].
\end{equation}
For the second identity choose a Hermitian orthonormal operator basis $\{E_\mu\}_{\mu=0}^{d^2-1}$. Then
\begin{equation}
    \Tr(\Phi)=\sum_\mu \tr\!\left[E_\mu\Phi(E_\mu)\right],
\end{equation}
and evaluating the Choi inner product in the same basis gives the identical sum. Hence $\langle \Phi,\mathrm{id}\rangle_{\mathrm{HS}}=\Tr(\Phi)$.
\end{proof}

\begin{proposition}[Hilbert--Schmidt projector onto $\mathcal V_1$]\label{prop:choi-projector}
Let $P_{\mathcal V_1}$ denote the Hilbert--Schmidt orthogonal projector onto $\mathcal V_1=\Span\{\Delta,\mathrm{id}\}$. For every Hermiticity-preserving superoperator $\Phi$,
\begin{equation}
P_{\mathcal V_1}(\Phi)
=
\frac{d\,\tr[\Phi(\id)]-\Tr(\Phi)}{d^2-1}\,\Delta
+
\frac{\Tr(\Phi)-\tr[\Phi(\id)]/d}{d^2-1}\,\mathrm{id}.
\label{eq:choi-projector}
\end{equation}
In particular, if $\tr[\Phi(\id)]=0$, then
\begin{equation}
P_{\mathcal V_1}(\Phi)
=
\frac{\Tr(\Phi)}{d^2-1}\,(\mathrm{id}-\Delta).
\label{eq:projector-trace-only}
\end{equation}
\end{proposition}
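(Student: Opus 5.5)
The plan is to compute the orthogonal projection onto the two-dimensional span of $\Delta$ and $\mathrm{id}$ through the Gram matrix of this (non-orthogonal) basis. By Lemma~\ref{lem:choi-pairings}, every pairing we need reduces to a trace. Write $P_{\mathcal V_1}(\Phi)=a\,\Delta+b\,\mathrm{id}$. The coefficients are fixed by the normal equations $\langle \Delta,\Phi\rangle=a\langle\Delta,\Delta\rangle+b\langle\Delta,\mathrm{id}\rangle$ and $\langle \mathrm{id},\Phi\rangle=a\langle\mathrm{id},\Delta\rangle+b\langle\mathrm{id},\mathrm{id}\rangle$.

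First, I evaluate the Gram entries by applying Lemma~\ref{lem:choi-pairings} to the Hermiticity-preserving maps $\Delta$ and $\mathrm{id}$:
\begin{itemize}
\item $\langle\Delta,\Delta\rangle=\tfrac1d\tr[\Delta(\id)]=1$;
\item $\langle\mathrm{id},\Delta\rangle=\tfrac1d\tr[\id]=1$;
\item $\langle\mathrm{id},\mathrm{id}\rangle=\Tr(\mathrm{id})=d^2$.
\end{itemize}
These can be cross-checked against the Choi matrices, e.g.\ $\tr[(\id\otimes\id/d)^2]=1$.

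Next, I need the pairings with $\Phi$ in the correct order. Because $\Phi$ is Hermiticity preserving, $J(\Phi)$ is Hermitian, and the Choi matrices of $\Delta$ and $\mathrm{id}$ are Hermitian too. Hence the inner product is real and symmetric on these arguments, giving $\langle\Delta,\Phi\rangle=\tfrac1d\tr[\Phi(\id)]$ and $\langle\mathrm{id},\Phi\rangle=\Tr(\Phi)$. This symmetry check is the only slightly delicate point, and it is exactly why the Hermiticity-preserving hypothesis appears.

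The normal equations then read
\[
a+b=\tfrac1d\tr[\Phi(\id)],\qquad a+d^2 b=\Tr(\Phi).
\]
Subtracting them gives $b=(\Tr\Phi-\tr[\Phi(\id)]/d)/(d^2-1)$. Substituting back gives $a=\tfrac1d\tr[\Phi(\id)]-b=(d\,\tr[\Phi(\id)]-\Tr\Phi)/(d^2-1)$, using $\tfrac1d+\tfrac1{d(d^2-1)}=\tfrac{d}{d^2-1}$. This is Eq.~\eqref{eq:choi-projector}. The system is solvable because the Gram determinant $d^2-1$ is nonzero for $d\ge2$.

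Finally, setting $\tr[\Phi(\id)]=0$ gives $a=-\Tr\Phi/(d^2-1)$ and $b=\Tr\Phi/(d^2-1)$, which is Eq.~\eqref{eq:projector-trace-only}. No step poses a real obstacle; the only care needed is the sign and ordering convention of the sesquilinear pairing and the arithmetic for $a$.
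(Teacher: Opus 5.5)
Your proposal is correct and is essentially the paper's proof: both take the Gram matrix of $\{\Delta,\mathrm{id}\}$ (entries $1,1,1,d^2$) together with the pairings of Lemma~\ref{lem:choi-pairings}, and your elimination in the normal equations is equivalent to the paper applying $G^{-1}$; your values of $a$ and $b$ check out. A small refinement on the argument order: $J(\Delta)$ and $J(\mathrm{id})$ are Hermitian, so $\langle\Delta,\Phi\rangle_{\mathrm{HS}}=\tfrac1d\tr[\Phi(\id)]$ and $\langle\mathrm{id},\Phi\rangle_{\mathrm{HS}}=\Tr(\Phi)$ hold for every superoperator, and the Hermiticity-preserving hypothesis is needed only to quote the lemma in its stated order, not for the projector formula itself.
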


\begin{proof}
The Gram matrix of $\{\Delta,\mathrm{id}\}$ in the Hilbert--Schmidt inner product is
\begin{equation}
G=
\begin{pmatrix}
1 & 1\\
1 & d^2
\end{pmatrix},
\qquad
G^{-1}=
\frac{1}{d^2-1}
\begin{pmatrix}
d^2 & -1\\
-1 & 1
\end{pmatrix}.
\end{equation}
Applying $G^{-1}$ to the pairing vector from Lemma~\ref{lem:choi-pairings} gives Eq.~\eqref{eq:choi-projector}. Equation~\eqref{eq:projector-trace-only} is the special case $\tr[\Phi(\id)]=0$.
\end{proof}

We now apply Proposition~\ref{prop:choi-projector} to the four elementary generator sectors. In Liouville form they are
\begin{align}
\mathcal H_{P}
  &= -i\bigl(\id\!\otimes\!P - P^{\mathsf T}\!\otimes\!\id\bigr),
\label{eq:liouville-H}\\
\mathcal S_{P}
  &= P^{\mathsf T}\!\otimes\!P - \id\!\otimes\!\id,
\label{eq:liouville-S}\\
\mathcal C_{P,Q}
  &= Q^{\mathsf T}\!\otimes\!P + P^{\mathsf T}\!\otimes\!Q
     - \frac{1}{2}\Bigl(
         \id\!\otimes\!\{P,Q\}
         + \{P,Q\}^{\mathsf T}\!\otimes\!\id
       \Bigr),
\label{eq:liouville-C}\\
\mathcal A_{P,Q}
  &= i\bigl(Q^{\mathsf T}\!\otimes\!P-P^{\mathsf T}\!\otimes\!Q\bigr)\notag\\
  &\quad +\frac{i}{2}\bigl(
           \id\!\otimes\![P,Q]
           +[P,Q]^{\mathsf T}\!\otimes\!\id
         \bigr).
\label{eq:liouville-A}
\end{align}

\begin{proposition}[Only stochastic generators contribute linearly]\label{prop:sector-traces}
For every nonidentity Pauli $P$ and every ordered pair $P<Q$,
\begin{align}
    \tr[H_P(\id)] &= 0,
    & \Tr(H_P) &= 0,\\
    \tr[C_{P,Q}(\id)] &= 0,
    & \Tr(C_{P,Q}) &= 0,\\
    \tr[A_{P,Q}(\id)] &= 0,
    & \Tr(A_{P,Q}) &= 0,\\
    \tr[S_P(\id)] &= 0,
    & \Tr(S_P) &= -d^2.
\end{align}
Hence Hamiltonian, correlation, and active generators do not contribute to the RB decay constant at first order, while stochastic generators do.
\end{proposition}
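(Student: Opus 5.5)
The plan is to check the eight identities one sector at a time, using two facts. First, $\tr[\Phi(\id)]$ is computed directly from the operator definitions. Second, the superoperator trace $\Tr(\Phi)$ is read off the Liouville forms \eqref{eq:liouville-H}--\eqref{eq:liouville-A} using $\Tr(A^{\mathsf T}\otimes B)=\tr(A)\,\tr(B)$. Then Proposition~\ref{prop:choi-projector} finishes the argument.

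For the trace-of-output identities I will evaluate each generator at $\rho=\id$. $H_P(\id)=-i[P,\id]=0$, and $S_P(\id)=P^2-\id=0$ because Paulis square to the identity. For the correlation generator, $C_{P,Q}(\id)=2\{P,Q\}\cdot\tfrac12\cdot\ldots$; more simply, it equals $\{P,Q\}-\tfrac12\{\{P,Q\},\id\}=\{P,Q\}-\{P,Q\}=0$. For the active generator, $A_{P,Q}(\id)=i\bigl(PQ-QP+\tfrac12\cdot 2[P,Q]\bigr)=2i[P,Q]$, which is a commutator and so has zero trace. (I should double-check the sign convention, since trace preservation would require the combination to vanish under $\tr$, and it does.) In every case $\tr[\Phi(\id)]=0$.

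For the superoperator traces I will use $\tr P=0$ for every nonidentity Pauli, $\tr\id=d$, and $\tr(PQ)=d\,\delta_{PQ}$.
\begin{itemize}[label={}]
\item $\Tr(\mathcal H_P)=-i\bigl(d\,\tr P-\tr P^{\mathsf T}\,d\bigr)=0$.
\item $\Tr(\mathcal S_P)=\tr(P^{\mathsf T})\tr(P)-d^2=-d^2$.
\item For $\mathcal C_{P,Q}$, the terms $\tr(Q^{\mathsf T})\tr(P)$ vanish. The remaining terms involve $\tr\{P,Q\}=2\tr(PQ)=0$ because $P\neq Q$.
\item For $\mathcal A_{P,Q}$, the product terms vanish for the same reason, and $\tr[P,Q]=0$.
\end{itemize}

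Finally, I will apply Eq.~\eqref{eq:projector-trace-only}. All generators are Hermiticity preserving, since the coefficients were chosen so that the Lindbladian forms are Hermiticity preserving. Since $\tr[\Phi(\id)]=0$ for every sector, $P_{\mathcal V_1}(\Phi)$ is proportional to $\Tr(\Phi)$. This projection therefore vanishes for the H, C, and A generators and equals $-\tfrac{d^2}{d^2-1}(\mathrm{id}-\Delta)$ for $S_P$.

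Combined with the reduction stated just before Proposition~\ref{prop:choi-projector}, the first-order splitting on $\mathcal V_1$ is $X\mapsto P_{\mathcal V_1}(\bar g\circ X)$. One technical point remains. We need $P_{\mathcal V_1}(\bar g\circ X)$ for $X\in\{\Delta,\mathrm{id}\}$, not just $P_{\mathcal V_1}(\bar g)$. For $X=\mathrm{id}$ this is just $P_{\mathcal V_1}(\bar g)$. For $X=\Delta$, we have $\bar g\circ\Delta(\rho)=\tr(\rho)\,\bar g(\id)/d$, so its Choi pairings involve $\tr[\bar g(\id)]=0$ and $\Tr(\bar g\circ\Delta)=\tr[\bar g(\id)]/d=0$. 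Hence that projection vanishes.

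Consequently the first-order perturbation restricted to $\mathcal V_1$ maps $\Delta\mapsto0$ and $\mathrm{id}\mapsto \tfrac{\Tr\bar g}{d^2-1}(\mathrm{id}-\Delta)$. The eigenvalue shift along the depolarizing direction is then $\Tr(\bar g)/(d^2-1)=-\tfrac{d^2}{d^2-1}\sum_{k,P}\Omega_k s_P^{(k)}$. The H, C, and A coefficients drop out, which proves the claim. I expect this last step, correctly identifying the eigenvalue of the $2\times2$ triangular first-order block, to be the only subtle point; the rest is routine.
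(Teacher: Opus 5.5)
Your proposal is correct and follows the paper's argument essentially step for step. You evaluate each generator on $\id$ from its defining formula and read off the superoperator traces from the Liouville forms using $\Tr(A\otimes B)=\tr(A)\tr(B)$ with $\tr P=0$ and $\tr(PQ)=d\,\delta_{PQ}$; Proposition~\ref{prop:choi-projector} then gives $P_{\mathcal V_1}(\bar g\circ\Delta)=0$ and the depolarizing shift $\Tr(\bar g)/(d^2-1)$, exactly as the text following the proposition does (delete the stray fragment at the start of your $C_{P,Q}(\id)$ line, since the computation after it is the correct one).
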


\begin{proof}
The channel actions on the identity are immediate from the defining formulas:
\begin{align}
    H_P(\id) &= -i[P,\id]=0,\\
    S_P(\id) &= P\id P - \id = 0,\\
    C_{P,Q}(\id) &= P Q + Q P - \{P,Q\}=0,\\
    A_{P,Q}(\id) &= i\!\left(PQ-QP+[P,Q]\right)=2i[P,Q].
\end{align}
Since every commutator of Pauli operators is traceless, all four sectors satisfy $\tr[g(\id)]=0$.

It therefore remains to compute the superoperator traces. Using $\Tr(A\otimes B)=\tr(A)\tr(B)$ in Eqs.~\eqref{eq:liouville-H}--\eqref{eq:liouville-A}, together with the facts that every nonidentity Pauli is traceless and every anticommutator or commutator appearing above is either zero or proportional to a nonidentity Pauli, we obtain
\begin{align}
    \Tr(H_P) &= 0,\\
    \Tr(C_{P,Q}) &= 0,\\
    \Tr(A_{P,Q}) &= 0.
\end{align}
For the stochastic generator,
\begin{equation}
    \Tr(S_P)
    =
    \Tr(P^{\mathsf T}\!\otimes\!P)
    -
    \Tr(\id\!\otimes\!\id)
    =
    \tr(P)^2 - \tr(\id)^2
    =
    -d^2.
\end{equation}
This proves the claim.
\end{proof}

Let $D=\mathrm{id}-\Delta$. Because $\bar g$ is trace annihilating,
$\tr[\bar g(\id)]=0$, and
$\Tr(\bar g\circ\Delta)=\tr[\bar g(\id)]/d=0$.
Proposition~\ref{prop:choi-projector} gives
\begin{equation}
    P_{\mathcal V_1}(\bar g\circ\Delta)=0,\qquad
    P_{\mathcal V_1}(\bar g\circ D)
    =\frac{\Tr(\bar g)}{d^2-1}D.
\end{equation}
Thus the trace-preserving branch has no first-order shift. Writing the
depolarizing branch as
$\gamma=1+\varepsilon\delta\gamma+O(\varepsilon^2)$, its first-order
coefficient is
\begin{equation}
    \delta\gamma
    =\frac{\Tr(\bar g)}{d^2-1}
    =\frac{1}{d^2-1}\sum_k \Omega_k\,\Tr(g_k).
    \label{eq:delta-gamma-choi}
\end{equation}
Restoring the physical elementary rates, which include the bookkeeping factor $\varepsilon$, and substituting Proposition~\ref{prop:sector-traces} into Eq.~\eqref{eq:delta-gamma-choi} yields the complete linear term:
\begin{equation}
    \gamma
    =
    1-\frac{d^2}{d^2-1}
    \sum_{k,P}\Omega_k\,s_P^{(k)}
    +O(\delta^2,\delta\theta,\theta^2).
    \label{eq:gamma-linear}
\end{equation}
Equivalently, the RB error rate itself is
\begin{equation}
    r
    =
    \sum_{k,P}\Omega_k\,s_P^{(k)}
    +O(\delta^2,\delta\theta,\theta^2).
    \label{eq:r-linear}
\end{equation}
This establishes the linear part of
Theorem~\ref{thm:main-perturbative}. The entire linear dependence of RB on elementary error generators is exhausted by the stochastic sector. Hamiltonian, correlation, and active generators are invisible at first order. This does not mean that a physical noise process with an active component has no first-order RB effect. For example, single-qubit
amplitude damping has both active and stochastic components in its error generator, and its linear contribution to $r$ is supplied by the stochastic components. Complete positivity constrains the allowed combinations of error-generator coefficients jointly.

\subsection{Quadratic Hamiltonian term and gauge invariance}

To isolate the displayed order in the coherent sector, set $\delta=0$ and restrict attention to purely Hamiltonian perturbations. Equation~\eqref{eq:gamma-linear} then shows that the linear term vanishes, so the leading possible contribution is quadratic in the Hamiltonian rates:
\begin{equation}
    Q_H(\bm h;\Omega)=\bm h^{\mathsf T}\QHmat(\Omega)\bm h,
\end{equation}
for some symmetric matrix $\QHmat(\Omega)$.

Ref.~\cite{GonzalezDeMendoza2026} identifies $\Gamma_H$ as the Hamiltonian tangent space to a gate-set gauge orbit and the Hamiltonian FOGI properties as its annihilator. Under a gauge transformation, $\Lmat$ transforms by the similarity $(I\otimes\mathcal M)\Lmat(I\otimes\mathcal M^{-1})$, so $\gamma$ is exactly gauge invariant. Restricting to Hamiltonian gauge generators $K$ keeps the gates unitary.

To compare homogeneous terms without treating the gauge action as exactly linear, set $\bm h=\varepsilon\bm a$ and apply $\mathcal M=e^{\varepsilon tK}$, with $\bm a$, $K$, and $t$ fixed. Equation~\eqref{eq:lin-gauge} gives
\begin{equation}
    \bm h'
    =\varepsilon\bigl(\bm a+t\bm g_H(K)\bigr)
    +O(\varepsilon^2).
\end{equation}
There is no linear Hamiltonian term in $\gamma$, so the $O(\varepsilon^2)$ correction to $\bm h'$ can affect $\gamma$ only at order $\varepsilon^3$ or higher. Comparing the coefficients of $\varepsilon^2$ in the exactly equal quantities $\gamma(\bm h')$ and $\gamma(\bm h)$ therefore gives
\begin{equation}
    Q_H(\bm a+t\bm g_H(K);\Omega)=Q_H(\bm a;\Omega).
\end{equation}
The coefficient linear in $t$ is
$2\bm a^{\mathsf T}\QHmat(\Omega)\bm g_H(K)$.
It vanishes for every $\bm a$, so
\begin{equation}
    \QHmat(\Omega)\,\bm g_H(K)=0.
    \label{eq:gauge-kernel}
\end{equation}
Thus $Q_H$ is a well-defined quadratic form on $\mathcal H/\Gamma_H$. This is the precise sense in which RB depends only on gauge-invariant combinations of coherent errors. Combining this with Eq.~\eqref{eq:rfromgamma}, we may also write the purely Hamiltonian RB error rate as
\begin{equation}
    r
    =
    \frac{d^2-1}{d^2}\,Q_H(\bm h;\Omega)
    +
    O(\theta^3).
\label{eq:r-hamiltonian}
\end{equation}

\section{Direct RB for the \texorpdfstring{$\{X_{\pi/2},Y_{\pi/2}\}$}{\{Xpi/2,Ypi/2\}} gate set}
\label{sec:drb-xy}
For a general gate set we do not know a useful closed form for $\QHmat(\Omega)$. We therefore now specialize to the two-gate example of Sec.~\ref{sec:motivation}, where the \FOGI reduction can be carried out analytically. Let
\begin{equation}
    p=\Omega(X_{\pi/2}),\qquad q=1-p=\Omega(Y_{\pi/2}).
\end{equation}
Appendix~\ref{app:xy-mixing} proves that this gate set is a sequence-asymptotic unitary 2-design for every fixed $0<p<1$, but not at the endpoints, and that its moment gap closes linearly there. For the Hamiltonian sector of the $\{X_{\pi/2},Y_{\pi/2}\}$ gate set, the full $6\times 6$ quadratic-form matrix of $1-\gamma$ in the coordinates
\begin{equation}
\vec{h} = \bigl(h_X^{(X)},h_Y^{(X)},h_Z^{(X)},h_X^{(Y)},h_Y^{(Y)},h_Z^{(Y)}\bigr)
\end{equation}
has rank $3$ for every $0<p<1$. Its kernel is exactly the three-dimensional linearized gauge subspace. Passing to the FOGI coordinates $\Fvec=(F_1,F_2,F_3)^T$ of Eqs.~\eqref{eq:F1}--\eqref{eq:F3}, which are the Hamiltonian specialization of Ref.~\cite{GonzalezDeMendoza2026}, therefore removes the gauge directions, and the reduced quadratic form is
\begin{equation}
    1-\gamma
    = \Fvec^{\,T}\Qmat_{\mathrm{DRB}}(p)\Fvec + O\!\left(\norm{\bm h}^3\right).
    \label{eq:drb-quad}
\end{equation}

\begin{proposition}\label{prop:qdrb}
For $0<p<1$, the reduced Hamiltonian quadratic-form matrix of $1-\gamma$ on the \FOGI quotient of the $\{X_{\pi/2},Y_{\pi/2}\}$ gate set is
\begin{equation}
\Qmat_{\mathrm{DRB}}(p)=
\frac{1}{3}
\begin{pmatrix}
\displaystyle \frac{4p(1+p^2)}{q} & 4pq & 8p^2\\[8pt]
4pq & \displaystyle \frac{4q(1+q^2)}{p} & 8q^2\\[8pt]
8p^2 & 8q^2 & 16pq
\end{pmatrix}.
\label{eq:Qdrb}
\end{equation}
This $3\times 3$ matrix is positive definite for every $0<p<1$.
\end{proposition}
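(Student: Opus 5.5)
We compute the second-order shift of the depolarizing eigenvalue directly, using degenerate perturbation theory on $\widehat{\Lmat}$ from Eq.~\eqref{eq:Lhat}.

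\textbf{Second-order formula.} Set $\delta=0$ and write $\widetilde{\mathcal G}_k=(\id+g_k+\tfrac12 g_k^2+O(\theta^3))\mathcal G_k$ with $g_k=\sum_P h_P^{(k)}\mathcal H_P$. By Proposition~\ref{prop:sector-traces}, the first-order splitting on $\mathcal V_1$ vanishes. The standard second-order formula for the depolarizing branch $D=\mathrm{id}-\Delta$ is therefore
\begin{equation*}
\gamma-1=\frac{\langle D,\widehat{\Lmat}_2 D\rangle}{\langle D,D\rangle}+\frac{\langle D,\widehat{\Lmat}_1(\id-\widehat{\Lmat}_0)^{-1}P_\perp\widehat{\Lmat}_1 D\rangle}{\langle D,D\rangle}.
\end{equation*}
Here $P_\perp=\id-P_{\mathcal V_1}$, and the inverse exists on $\mathcal V_1^\perp$ because $\Delta_2(p)>0$; this is the fact established in Appendix~\ref{app:xy-mixing}.

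\textbf{The two terms.} The direct term is $\Tr(\tfrac12\sum_k\Omega_k g_k^2)/(d^2-1)$. For one qubit it equals $-\tfrac{8}{3}\sum_k\Omega_k\|\bm h^{(k)}\|^2$.

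The resolvent term involves $\widehat{\Lmat}_1 D=\bar g\circ D$ restricted to the traceless sector. Since $D$ acts as the identity there, this is essentially the $3\times3$ antisymmetric generator $\bar g=p\,g_X+q\,g_Y$. For the left factor, the adjoint of $\widehat{\Lmat}_1$ acting on $D$ can be handled the same way.

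The key reduction is that everything lives in the $9$-dimensional space of $3\times3$ real matrices $M$ on the traceless (Bloch) sector. On that space $\widehat{\Lmat}_0(M)=pR_XMR_X^{\mathsf T}+qR_YMR_Y^{\mathsf T}$, with $R_X,R_Y$ the $SO(3)$ quarter-turn rotations, and $\widehat{\Lmat}_1(M)=p\,A_XR_XMR_X^{\mathsf T}+q\,A_YR_YMR_Y^{\mathsf T}$, with $A_k$ the antisymmetric matrix of $\bm h^{(k)}$. Since $\bar g$ is antisymmetric, $P_\perp\widehat{\Lmat}_1 D$ lies in the antisymmetric subspace $\mathfrak{so}(3)$. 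That subspace is invariant under $\widehat{\Lmat}_0$, which acts on it as the $3\times3$ matrix $T=pR_X+qR_Y$ on axial vectors. The resolvent is therefore $(\id-T)^{-1}$, a $3\times3$ inverse.

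Pairing back, using $\langle D,\widehat{\Lmat}_1 N\rangle$ with $N$ antisymmetric, gives a contribution of the form $-c\,\bm v^{\mathsf T}(\id-T)^{-1}\bm u$. Here $\bm u=p\bm h^{(X)}+q\bm h^{(Y)}$, and $\bm v$ collects $p\,R_X^{\mathsf T}\bm h^{(X)}+q\,R_Y^{\mathsf T}\bm h^{(Y)}$ (up to orientation conventions, which need checking). Symmetrizing gives the $6\times6$ matrix $\QHmat(p)$.

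\textbf{Reduction to FOGI coordinates.} The factor $(\id-T)^{-1}$ has entries with $\det(\id-T)\propto pq$ in the denominator. This is the source of the $1/q$ and $1/p$ terms in Eq.~\eqref{eq:Qdrb}, and it is consistent with the gap closing linearly at the endpoints.

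Next, verify directly that the gauge vectors $\bm g_H(K)=\bigl((\id-R_X)\bm k,(\id-R_Y)\bm k\bigr)$, for $\bm k\in\mathbb R^3$, lie in the kernel; Eq.~\eqref{eq:gauge-kernel} already guarantees this. Then choose a complement: the representatives $\bm h=(F_1,0,0,0,F_2,0)$ and $\bm h=(0,F_3,F_3,0,0,0)$ map to $\Fvec$ under Eqs.~\eqref{eq:F1}--\eqref{eq:F3} and are independent modulo $\Gamma_H$. The Gram matrix of $\QHmat$ on these representatives is $\Qmat_{\mathrm{DRB}}(p)$, and it should be checked entrywise against Eq.~\eqref{eq:Qdrb}. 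The kernel has dimension exactly $3$ because $\bm k\mapsto\bm g_H(K)$ is injective: $\ker(\id-R_X)\cap\ker(\id-R_Y)=0$.

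\textbf{Positive definiteness and the main obstacle.} All three leading principal minors must be positive. The first is $4p(1+p^2)/(3q)>0$. The second is $16[(1+p^2)(1+q^2)-p^2q^2]/9=16(1+p^2+q^2)/9>0$. For the full determinant, expand and simplify using $p+q=1$, expecting a manifestly positive polynomial in $p,q$ times $pq$ over constants.

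The main obstacle is the bookkeeping of the resolvent term: orientations of $R_X,R_Y$, transposes from column vectorization, and the normalization $\langle D,D\rangle=d^2-1$. Sign or factor errors there would spoil the match. I would first check the $p\to$ symmetric point $p=\tfrac12$ and the pure-$F_1$ case numerically against the exact $\Lmat$ eigenvalue before doing the symbolic computation.
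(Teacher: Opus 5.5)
Your route is the paper's: second-order perturbation theory for the simple unit eigenvalue on the traceless--traceless block (Eq.~\eqref{eq:drb-second-order-appendix}), a check that the $6\times6$ form annihilates $\bigl((I-R_X)\mathbf k,(I-R_Y)\mathbf k\bigr)$, evaluation on the representatives $f_1,f_2,f_3$, and Sylvester's criterion. Reducing the resolvent term to $(I-T)^{-1}$, with $T=pR_X+qR_Y$ acting on axial vectors, is a worthwhile sharpening of the step the paper compresses into ``evaluating and collecting coefficients''. It is the direct-RB analogue of the $\mathbf u,\mathbf w$ computation in Appendix~\ref{app:crb-manual}, where $\Lmat_0$ kills the antisymmetric block and the resolvent collapses to the identity. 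It also makes the endpoint singularities transparent: $T$ is the first-moment operator, $\det(\lambda I-T)=f_-(\lambda)$, and hence $\det(I-T)=f_-(1)=2pq$.

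As written, however, the computation would not close, because your direct term is off by a factor of two. On the Bloch sector $g_k$ acts as $A_k=2[\bm h^{(k)}]_\times$, so $\Tr(g_k^2)=\tr(A_k^2)=-8\|\bm h^{(k)}\|^2$. With the factor $\tfrac12$ and $\langle D,D\rangle=3$, the direct term is $-\tfrac43\sum_k\Omega_k\|\bm h^{(k)}\|^2$, not $-\tfrac83\sum_k\Omega_k\|\bm h^{(k)}\|^2$; compare Eq.~\eqref{eq:L2-crb-appendix}, where $\tfrac1{24}\cdot\tfrac43=\tfrac1{18}$.

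For the resolvent term, $\widehat{\Lmat}_1(I_3)=2[\mathbf u]_\times$ and $\widehat{\Lmat}_1^{\dagger}(I_3)=-2[\mathbf w]_\times$, with exactly the $\mathbf w=pR_X^{\mathsf T}\mathbf h_X+qR_Y^{\mathsf T}\mathbf h_Y$ you guessed. Together with $\langle[\mathbf x]_\times,[\mathbf y]_\times\rangle_F=2\,\mathbf x\cdot\mathbf y$, this fixes your constant at $c=\tfrac83$, giving
\begin{equation*}
1-\gamma=\tfrac43\bigl(p\|\mathbf h_X\|^2+q\|\mathbf h_Y\|^2\bigr)+\tfrac83\,\mathbf w^{\mathsf T}(I-T)^{-1}\mathbf u+O(\|\mathbf h\|^3),
\end{equation*}
where $\mathbf u=p\mathbf h_X+q\mathbf h_Y$ and
\begin{equation*}
(I-T)^{-1}=\frac12\begin{pmatrix}(1+p)/q&1&1\\ 1&(1+q)/p&-1\\ -1&1&1\end{pmatrix}.
\end{equation*}
Evaluating this on $f_1,f_2,f_3$ reproduces every entry of Eq.~\eqref{eq:Qdrb}. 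For example, $f_1$ gives $\tfrac43p+\tfrac43p^2(1+p)/q=4p(1+p^2)/(3q)$, and $f_3$ gives $\tfrac83p+\tfrac83p(q-p)=16pq/3$.

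With your $-\tfrac83$, the form would fail your own gauge check. On $g_x=(0,0,0,1,0,1)^{\mathsf T}$ the resolvent term contributes $-\tfrac83q$ to $1-\gamma$, and only the correct direct term $+\tfrac83q$ cancels it.

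Two smaller points. First, the third minor needs no guessing: after using $p+q=1$ the bracket collapses to exactly $1$, so $\det\Qmat_{\mathrm{DRB}}=256pq/27$. Second, injectivity of $\mathbf k\mapsto\bm g_H(K)$ only gives $\dim\Gamma_H=3$. That the kernel is exactly $\Gamma_H$ follows from positive definiteness of the reduced form, and the proposition itself does not need it.
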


\begin{proof}
Appendix~\ref{app:drb-manual} derives the full $6\times 6$ quadratic-form matrix of $1-\gamma$ from second-order perturbation theory on the exact one-qubit $\Lmat$ matrix. That matrix annihilates the three gauge vectors generated by Eq.~\eqref{eq:lin-gauge}. Choosing any representatives of the three \FOGI coordinates $(F_1,F_2,F_3)$ and evaluating the quadratic form on those representatives yields Eq.~\eqref{eq:Qdrb}; the result is independent of the chosen representatives because the full quadratic-form matrix vanishes on the gauge subspace. Positive definiteness for $0<p<1$ follows from Sylvester's criterion, since the leading principal minors satisfy
\begin{equation}
\begin{aligned}
Q_{11}=\frac{4p(1+p^2)}{3q}>0,
\\
\det \Qmat_{\mathrm{DRB}}^{(2)}=\frac{32(p^2-p+1)}{9}>0,
\\
\det\Qmat_{\mathrm{DRB}}=\frac{256pq}{27}>0.
\end{aligned}
\end{equation}
\end{proof}

This is the central analytic result for the example gate set. It captures the coherent dependence through quadratic order, since the remaining three Hamiltonian directions are precisely the linearized gauge directions. All big-$O$ statements in this section are at fixed $p\in(0,1)$ and refer to the full Hamiltonian-rate vector $\bm h$, not just $\Fvec$. For the one-qubit RB error rate itself, Eq.~\eqref{eq:rfromgamma} gives
\begin{equation}
    r_{\mathrm{DRB}}(p)
    =
    \frac{3}{4}\,\Fvec^{\,T}\Qmat_{\mathrm{DRB}}(p)\Fvec
    + O\!\left(\norm{\bm h}^3\right).
    \label{eq:drb-r}
\end{equation}

Several basic features are immediate. First, the coefficients diverge as $p\to0$ or $1$. This is a singularity of the fixed-$p$ perturbative coefficient, not a divergence of the exact physical RB rate, which remains bounded; at fixed error strength the expansion fails before the endpoint. At $p=0$ or $1$ the sampled gate set is not a sequence-asymptotic 2-design. Second, the relational coordinate $F_3$ is weighted by $16pq/3$, so its sensitivity is maximal near $p=\nicefrac{1}{2}$ and vanishes as either gate disappears. Third, the intrinsic coordinates $F_1$ and $F_2$ are weighted asymmetrically when $p\neq\nicefrac{1}{2}$.

At the symmetric point $p=\nicefrac{1}{2}$,
\begin{equation}
\Qmat_{\mathrm{DRB}}\!\left(\frac{1}{2}\right)
=
\frac{1}{3}
\begin{pmatrix}
5 & 1 & 2\\
1 & 5 & 2\\
2 & 2 & 4
\end{pmatrix},
\end{equation}
whose normalized eigenvectors in the displayed $(F_1,F_2,F_3)$ coordinates are
\begin{align}
v_1 &= (1,1,1)^T/\sqrt3, & \lambda_1 &= \frac{8}{3},\\
v_2 &= (1,-1,0)^T/\sqrt2, & \lambda_2 &= \frac{4}{3},\\
v_3 &= (-1,-1,2)^T/\sqrt6, & \lambda_3 &= \frac{2}{3}.
\end{align}
With the Euclidean metric in these coordinates, the most sensitive mode has equal $F_1,F_2,F_3$, the next is differential over-rotation, and the least sensitive mode has $F_1=F_2=-F_3/2$, in which relational tilt partly cancels equal intrinsic over-rotations.

The sampling dependence of these eigenvalues is illustrated in Fig.~\ref{fig:drb_eigs} in Appendix~\ref{app:drb-spectrum}.

For the coherent error models introduced in Sec.~\ref{sec:motivation}, where $F_1=\theta_X/2+O(\eta^2)$, $F_2=\theta_Y/2+O(\eta^2)$, and $F_3=\phi/2+O(\eta^2)$, Eq.~\eqref{eq:drb-r} yields explicit formulas for the leading RB error. For equal over-rotation or equal under-rotation,
\begin{equation}
r_{\mathrm{eq}}(p)
=
\frac{(p^2-p+1)(2p^2-2p+1)}{2p(1-p)}\,\theta^2
+
O(\theta^3),
\label{eq:req}
\end{equation}
for opposite over- and under-rotation,
\begin{equation}
r_{\mathrm{opp}}(p)
=
\frac{3p^2-3p+1}{2p(1-p)}\,\theta^2
+
O(\theta^3),
\label{eq:ropp}
\end{equation}
and for pure tilt,
\begin{equation}
r_{\mathrm{tilt}}(p)
=
p(1-p)\phi^2
+
O(\phi^3).
\label{eq:rtilt}
\end{equation}
For a quantitative comparison with native-gate infidelity, consider equal over-rotations, $\theta_X=\theta_Y=\theta$ and $\phi=0$. Both native gates have the same leading entanglement infidelity $\theta^2/4$, so the weighted-infidelity heuristic also gives $\theta^2/4$, independently of $p$. At the symmetric sampling probability, Eq.~\eqref{eq:req} gives
\begin{equation}
    r_{\mathrm{eq}}\!\left(\frac12\right)
    =\frac34\theta^2+O(\theta^3),
\end{equation}
which is three times the native-gate infidelity to leading order. At $p=0.9$, the leading-order ratio is approximately $16.6$. These comparisons take the small-error limit at each fixed sampling probability; they do not assume a perturbative approximation that remains uniform as $p$ approaches an endpoint.

\section{\CRB as a compilation-dependent quadratic form}
\label{sec:crb}

We previously saw in Figure~\ref{fig:crb_example} that, even when native-gate infidelities are equal to leading order, the mean-length-normalized RB rate can vary substantially with both the kind of coherent error and the chosen compilation. To quantitatively explain this effect, we apply the same perturbative framework to the uniformly weighted one-qubit Clifford group. For a fixed compilation $\kappa$, let a Clifford $C\in C_1$ be implemented by a native sequence $g_1,\ldots,g_m$ with $g_j\in\{X_{\pi/2},Y_{\pi/2}\}$. Refactoring the compiled Clifford into post-gate form gives, to first order in the native Hamiltonian errors,
\begin{equation}
    \mathbf h_C^{(\kappa)} = \sum_{j=1}^m R_{g_m\cdots g_{j+1}}\,\mathbf h_{g_j},
    \label{eq:compiled-hC}
\end{equation}
where $\mathbf h_{g_j}$ is the Hamiltonian vector of the $j$th native gate and $R_{g_m\cdots g_{j+1}}$ is the Bloch-sphere rotation induced by the full ideal suffix following it (the identity when $j=m$). Here $\mathbf h_C^{(\kappa)}$ denotes the first-order compiled vector. Its difference from the exact compiled Hamiltonian vector is second order in the native error coefficients and therefore affects $\gamma$ only at third order or higher, since the linear Hamiltonian contribution vanishes. Thus each compilation determines a point $\mathbf h=(\mathbf h_C)_{C\in C_1}\in V:=(\mathbb R^3)^{24}$. For the spectral statements below we equip this coordinate space with
\begin{equation}
    \langle\mathbf h,\mathbf k\rangle_V=\sum_{C\in C_1}\mathbf h_C\cdot\mathbf k_C.
    \label{eq:crb-metric}
\end{equation}
This is the Euclidean metric of the Hamiltonian coefficients. The gauge kernel and quotient are canonical, but orthogonality and quadratic-form eigenvalues depend on this metric \cite{GonzalezDeMendoza2026}.

On the traceless Pauli sector, the CRB $\Lmat$ matrix is
\begin{equation}
\mathscr L(\mathbf h)=\frac1{24}\sum_{C\in C_1} R_C\otimes e^{B_C}R_C,
\qquad
B_C:=2[\mathbf h_C]_\times .
\label{eq:crb-transfer}
\end{equation}
Writing $v=\operatorname{vec}(I_3)/\sqrt 3$, $P=\ket{v}\!\bra{v}$, and
\begin{equation}
\mathbf u(\mathbf h):=\sum_{C\in C_1}\mathbf h_C,
\qquad
\mathbf w(\mathbf h):=\sum_{C\in C_1}R_C^{\mathsf T}\mathbf h_C,
\label{eq:u-w-main}
\end{equation}
we obtain the following quadratic description of coherent CRB.

\begin{proposition}\label{prop:crb-main}
Let $\gamma(\mathbf h)$ be the CRB decay eigenvalue associated with Eq.~\eqref{eq:crb-transfer}, and define its homogeneous quadratic coefficient by
\begin{equation}
1-\gamma(\mathbf h)=Q_{\mathrm{CRB}}^{(2)}(\mathbf h)+O(\|\mathbf h\|^3).
\end{equation}
Then
\begin{equation}
Q_{\mathrm{CRB}}^{(2)}(\mathbf h)
=
\frac1{18}\sum_{C\in C_1}\|\mathbf h_C\|^2
+
\frac1{216}\,\mathbf u(\mathbf h)\!\cdot\!\mathbf w(\mathbf h).
\label{eq:crb-quadratic-form}
\end{equation}
Equivalently, if
\[
V_{\rm gauge}=\{\mathbf h_C=\mathbf a-R_C\mathbf a:\mathbf a\in\mathbb R^3\},
\]
\[
V_{\rm dg}=\{\mathbf h_C=\mathbf a+R_C\mathbf a:\mathbf a\in\mathbb R^3\},
\]
\[
V_\perp=\{\mathbf h\in V:\mathbf u(\mathbf h)=0,\ \mathbf w(\mathbf h)=0\},
\]
then
\begin{equation}
D_{\rm CRB}=\frac19\,\Pi_{\rm dg}+\frac1{18}\,\Pi_\perp,
\qquad
\ker D_{\rm CRB}=V_{\rm gauge},
\label{eq:Dcrb-main}
\end{equation}
where $D_{\rm CRB}$ is the quadratic-form matrix defined by $Q_{\mathrm{CRB}}^{(2)}(\mathbf h)=\mathbf h^{\mathsf T}D_{\rm CRB}\mathbf h$.
\end{proposition}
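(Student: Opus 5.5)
The plan is to run nondegenerate second-order perturbation theory directly on the $9\times 9$ matrix of Eq.~\eqref{eq:crb-transfer}, exploiting the fact that for the Clifford group the ideal matrix is already an exact rank-one projector. Since $C_1$ is a unitary 2-design, the Haar second-moment twirl on the traceless sector gives
\[
\mathscr L(0)=\tfrac1{24}\sum_C R_C\otimes R_C=P=\ket{v}\!\bra{v}.
\]
Hence the ideal depolarizing eigenvalue $1$ is simple, every other ideal eigenvalue is $0$, and the reduced resolvent on the complement of $v$ is just $(1-0)^{-1}(I-P)$.

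Expand $e^{B_C}=I+B_C+\tfrac12B_C^2+O(\|\mathbf h\|^3)$. This gives $\mathscr L=P+\mathscr L_1+\mathscr L_2+O(\|\mathbf h\|^3)$ with
\[
\mathscr L_1=\tfrac1{24}\sum_C R_C\otimes B_CR_C,\qquad
\mathscr L_2=\tfrac1{48}\sum_C R_C\otimes B_C^2R_C .
\]
Standard Rayleigh--Schr\"odinger perturbation theory then gives
\[
\gamma=1+\langle v|\mathscr L_1|v\rangle+\langle v|\mathscr L_2|v\rangle+\langle v|\mathscr L_1(I-P)\mathscr L_1|v\rangle+O(\|\mathbf h\|^3).
\]
Each term is evaluated using $(A\otimes B)\operatorname{vec}(M)=\operatorname{vec}(BMA^{\mathsf T})$, $R_CR_C^{\mathsf T}=I$, and $\operatorname{tr}([\mathbf a]_\times[\mathbf b]_\times)=-2\,\mathbf a\cdot\mathbf b$.

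The three contributions come out as follows.
\begin{enumerate}[label=(\roman*)]
\item First order: $\mathscr L_1 v=\tfrac{1}{24\sqrt3}\operatorname{vec}(2[\mathbf u]_\times)$. This is antisymmetric, hence orthogonal to $v$, so $\langle v|\mathscr L_1|v\rangle=0$ (consistent with Proposition~\ref{prop:sector-traces}) and $(I-P)\mathscr L_1v=\mathscr L_1v$.
\item Second-order diagonal term: $\langle v|\mathscr L_2|v\rangle=\tfrac1{72}\sum_C\tfrac12\operatorname{tr}(B_C^2)=-\tfrac1{18}\sum_C\|\mathbf h_C\|^2$.
\item Cross term: $\langle v|\mathscr L_1\mathscr L_1|v\rangle=\tfrac{1}{3\cdot 24^2}\sum_C\operatorname{tr}\!\bigl(B_C[2R_C\mathbf u]_\times\bigr)=-\tfrac{1}{216}\sum_C\mathbf h_C\cdot R_C\mathbf u=-\tfrac1{216}\,\mathbf u\cdot\mathbf w$.
\end{enumerate}
Here (iii) uses $R[\mathbf u]_\times R^{\mathsf T}=[R\mathbf u]_\times$. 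Summing (ii) and (iii) yields Eq.~\eqref{eq:crb-quadratic-form}.

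For the spectral form, I would use the bilinear form
\[
\beta(\mathbf h,\mathbf k)=\tfrac1{18}\langle\mathbf h,\mathbf k\rangle_V+\tfrac1{432}\bigl(\mathbf u_h\cdot\mathbf w_k+\mathbf u_k\cdot\mathbf w_h\bigr).
\]
The key identity is $\sum_C R_C=0$, which holds because the adjoint representation of $C_1$ on $\mathbb R^3$ is irreducible and nontrivial.
\begin{itemize}
\item $V_\perp$ is the $\langle\cdot,\cdot\rangle_V$-orthogonal complement of $\{(\mathbf a)_C\}+\{(R_C\mathbf a)_C\}$. This follows because $\langle\mathbf h,(\mathbf a)\rangle_V=\mathbf u\cdot\mathbf a$ and $\langle\mathbf h,(R_C\mathbf a)\rangle_V=\mathbf w\cdot\mathbf a$.
\item $V_{\rm gauge}$ and $V_{\rm dg}$ are mutually orthogonal, by $\sum_C R_C=0$ together with orthogonality of $R_C$.
\item Together these give $V=V_{\rm gauge}\oplus V_{\rm dg}\oplus V_\perp$, with dimensions $3+3+66$.
\end{itemize}
On $V_\perp$ the $\mathbf u\cdot\mathbf w$ term vanishes, giving eigenvalue $\tfrac1{18}$. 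Cross terms between $V_\perp$ and the other two blocks vanish because $\mathbf u$ and $\mathbf w$ are linear and zero on $V_\perp$. The remaining blocks:
\begin{itemize}
\item Gauge vector $\mathbf h_C=\mathbf a-R_C\mathbf a$: here $\mathbf u=24\mathbf a$, $\mathbf w=-24\mathbf a$, and $\|\mathbf h\|_V^2=48|\mathbf a|^2$. Hence $Q=\tfrac{48}{18}|\mathbf a|^2-\tfrac{576}{216}|\mathbf a|^2=0$.
\item Doubled-gauge vector $\mathbf h_C=\mathbf a+R_C\mathbf a$: here $\mathbf u=\mathbf w=24\mathbf a$, so $Q=\tfrac{16}{3}|\mathbf a|^2=\tfrac19\|\mathbf h\|_V^2$.
\item Gauge--dg cross term: $\tfrac{576}{432}(\mathbf a\cdot\mathbf b-\mathbf a\cdot\mathbf b)=0$.
\end{itemize}
This yields Eq.~\eqref{eq:Dcrb-main}, and since $\tfrac19,\tfrac1{18}>0$, $\ker D_{\rm CRB}=V_{\rm gauge}$ exactly.

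The main obstacle is bookkeeping rather than concept. It consists of getting the vectorization and Kronecker-ordering convention consistent with Eq.~\eqref{eq:crb-transfer}, so that the cross term produces $\mathbf h_C\cdot R_C\mathbf u$ rather than $\mathbf h_C\cdot R_C^{\mathsf T}\mathbf u$ (which would give the wrong $\mathbf w$), and tracking the factors $2$, $\tfrac13$, and $24^{-2}$. I would first check the signs against the gauge-null result: $Q$ must vanish on $V_{\rm gauge}$ independently by Eq.~\eqref{eq:gauge-kernel-thm}, and only the correct relative coefficient $\tfrac1{216}$ achieves this.
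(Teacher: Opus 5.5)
Your proposal is correct and follows the paper's own argument in Appendix~\ref{app:crb-manual} essentially step for step. That argument runs second-order perturbation theory with $\mathscr L_0=P$ and reduced resolvent $I-P$, evaluates $\langle v|\mathscr L_2|v\rangle$ and the cross term in the same way, and then uses the same orthogonal decomposition $V=V_{\rm gauge}\oplus V_{\rm dg}\oplus V_\perp$ with $Q_{\mathrm{CRB}}^{(2)}$ computed on each piece. The differences are only cosmetic: you compose $\mathscr L_1$ twice instead of pairing $\mathscr L_1(I_3)$ with $\mathscr L_1^{\dagger}(I_3)$, and you explicitly check that the polarized gauge--dg cross term vanishes, which the paper leaves implicit.
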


\begin{proof}
A direct perturbative derivation, together with a complementary symmetry interpretation, is given in Appendix~\ref{app:crb-manual}.
\end{proof}

For one qubit, the associated RB error rate is simply
\begin{equation}
    r_{\mathrm{CRB}}=\frac34\,Q_{\mathrm{CRB}}^{(2)}(\mathbf h)+O(\|\mathbf h\|^3).
\end{equation}
Thus the dashed estimates in Fig.~\ref{fig:crb_example} plot $\frac{3}{4\bar m_\kappa}Q_{\mathrm{CRB}}^{(2)}(\mathbf h)$ for each compilation, while the filled histograms use the exact compiled-Clifford $\Lmat$ matrix.

Equation~\eqref{eq:Dcrb-main} shows that, to quadratic order, \CRB is insensitive to the gauge subspace, weights the three symmetric collective directions by $1/9$, and weights the remaining 66 orthogonal directions by $1/18$. Combined with Eq.~\eqref{eq:compiled-hC}, this explains Fig.~\ref{fig:crb_example}: compilation changes how a native error vector projects into these fixed Clifford-level sectors.

\section{Conclusion}

We have developed a perturbative, gauge-invariant interpretation of RB for small gate-local Markovian noise on random circuits. We have shown that for every fixed sequence-asymptotic unitary 2-design and in a small-noise regime, stochastic Pauli error rates contribute linearly to the RB decay parameter, while Hamiltonian error rates contribute through a gauge-invariant quadratic form. Gauge invariance forces that quadratic form to annihilate the linearized gauge subspace, so the leading coherent RB contribution depends only on \FOGI properties \cite{GonzalezDeMendoza2026}. For the native gate set $\{X_{\pi/2},Y_{\pi/2}\}$, the Hamiltonian sector reduces to three \FOGI coordinates: two intrinsic over-rotation coordinates and one relational tilt coordinate. The resulting analytic formula explains both the sampling dependence of \DRB and the compilation dependence of \CRB.

The explicit \DRB formulas derived here are useful in their own right. Measurements at several sampling probabilities give quadratic constraints on intrinsic over-rotations and relational axis tilt; resolving signed error coordinates requires additional calibration information or controlled error injections. The formulas therefore predict the RB response to a specified error model, rather than reconstructing a full model from one scalar rate.

Several natural extensions remain. First, it would be valuable to derive similarly explicit quadratic forms for larger native gate sets and for multi-qubit gate sets, containing multi-qubit entangling gates. Second, correlation and active generators should be incorporated beyond the first nontrivial order treated here. Finally, the same perturbative viewpoint may be useful for other random-circuit diagnostics (e.g., Pauli observables), but their response coefficients must be derived for each observable.

\section*{Acknowledgments}
Work partially funded by NSF award DMR-1747426. AMP acknowledges funding from a NASA Space Technology Graduate Research Opportunity award. This material was funded in part by the U.S. Department of Energy, Office of Science, Office of Advanced Scientific Computing Research Quantum Testbed Program, as well as by Sandia National Laboratories' Laboratory Directed Research and Development program. Sandia National Laboratories is a multimission laboratory managed and operated by National Technology and Engineering Solutions of Sandia, LLC, a wholly owned subsidiary of Honeywell International, Inc., for the U.S. Department of Energy's National Nuclear Security Administration under contract DE-NA0003525.
\bibliography{bib}

@article{Knill2008,
  author  = {Knill, E. and Leibfried, D. and Reichle, R. and Britton, J. and Blakestad, R. B. and Jost, J. D. and Langer, C. and Ozeri, R. and Seidelin, S. and Wineland, D. J.},
  title   = {Randomized Benchmarking of Quantum Gates},
  journal = {Phys. Rev. A},
  volume  = {77},
  pages   = {012307},
  year    = {2008},
  doi     = {10.1103/PhysRevA.77.012307}
}

@article{Magesan2011,
  author  = {Magesan, Easwar and Gambetta, J. M. and Emerson, Joseph},
  title   = {Scalable and Robust Randomized Benchmarking of Quantum Processes},
  journal = {Phys. Rev. Lett.},
  volume  = {106},
  pages   = {180504},
  year    = {2011},
  doi     = {10.1103/PhysRevLett.106.180504}
}

@article{Magesan2012,
  author  = {Magesan, Easwar and Gambetta, Jay M. and Emerson, Joseph},
  title   = {Characterizing Quantum Gates via Randomized Benchmarking},
  journal = {Phys. Rev. A},
  volume  = {85},
  pages   = {042311},
  year    = {2012},
  doi     = {10.1103/PhysRevA.85.042311}
}

@article{Proctor2017,
  author  = {Proctor, Timothy and Rudinger, Kenneth and Young, Kevin and Sarovar, Mohan and Blume-Kohout, Robin},
  title   = {What Randomized Benchmarking Actually Measures},
  journal = {Phys. Rev. Lett.},
  volume  = {119},
  pages   = {130502},
  year    = {2017},
  doi     = {10.1103/PhysRevLett.119.130502}
}

@article{Wallman2018,
  author  = {Wallman, Joel J.},
  title   = {Randomized Benchmarking with Gate-Dependent Noise},
  journal = {Quantum},
  volume  = {2},
  pages   = {47},
  year    = {2018},
  doi     = {10.22331/q-2018-01-29-47}
}

@article{CarignanDugas2018,
  author  = {Carignan-Dugas, Arnaud and Boone, Kristine and Wallman, Joel J. and Emerson, Joseph},
  title   = {From Randomized Benchmarking Experiments to Gate-Set Circuit Fidelity: How to Interpret Randomized Benchmarking Decay Parameters},
  journal = {New J. Phys.},
  volume  = {20},
  pages   = {092001},
  year    = {2018},
  doi     = {10.1088/1367-2630/aadcc7}
}

@article{Boone2019,
  author  = {Boone, Kristine and Carignan-Dugas, Arnaud and Wallman, Joel J. and Emerson, Joseph},
  title   = {Randomized Benchmarking under Different Gate Sets},
  journal = {Phys. Rev. A},
  volume  = {99},
  pages   = {032329},
  year    = {2019},
  doi     = {10.1103/PhysRevA.99.032329}
}

@article{Proctor2019,
  author  = {Proctor, Timothy J. and Carignan-Dugas, Arnaud and Rudinger, Kenneth and Nielsen, Erik and Blume-Kohout, Robin and Young, Kevin},
  title   = {Direct Randomized Benchmarking for Multiqubit Devices},
  journal = {Phys. Rev. Lett.},
  volume  = {123},
  pages   = {030503},
  year    = {2019},
  doi     = {10.1103/PhysRevLett.123.030503}
}

@article{Nielsen2021,
  author  = {Nielsen, Erik and Gamble, John King and Rudinger, Kenneth and Scholten, Travis and Young, Kevin and Blume-Kohout, Robin},
  title   = {Gate Set Tomography},
  journal = {Quantum},
  volume  = {5},
  pages   = {557},
  year    = {2021},
  doi     = {10.22331/q-2021-10-05-557}
}

@article{BlumeKohout2022,
  author  = {Blume-Kohout, Robin and da Silva, Marcus P. and Nielsen, Erik and Proctor, Timothy and Rudinger, Kenneth and Sarovar, Mohan and Young, Kevin},
  title   = {A Taxonomy of Small Markovian Errors},
  journal = {PRX Quantum},
  volume  = {3},
  pages   = {020335},
  year    = {2022},
  doi     = {10.1103/PRXQuantum.3.020335}
}

@article{Madzik2022,
  author  = {M{\k{a}}dzik, Mateusz T. and Asaad, Sameh and Youssry, Ahmed and Joecker, Boris and Rudinger, Kenneth M. and Nielsen, Erik and Young, Kevin C. and Proctor, Timothy J. and Baczewski, Andrew D. and Laucht, Andreas and Schmitt, Vincent and Hudson, Fleur E. and Itoh, Kohei M. and Jakob, Andrea M. and Johnson, Brett C. and Jamieson, David N. and Dzurak, Andrea S. and Ferrie, Christopher and Blume-Kohout, Robin and Morello, Andrea},
  title   = {Precision Tomography of a Three-Qubit Donor Quantum Processor in Silicon},
  journal = {Nature},
  volume  = {601},
  pages   = {348--353},
  year    = {2022},
  doi     = {10.1038/s41586-021-04292-7}
}

@article{Polloreno2025,
  author  = {Polloreno, Anthony M. and Carignan-Dugas, Arnaud and Hines, Jordan and Blume-Kohout, Robin and Young, Kevin and Proctor, Timothy},
  title   = {A Theory of Direct Randomized Benchmarking},
  journal = {Quantum},
  volume  = {9},
  pages   = {1848},
  year    = {2025},
  doi     = {10.22331/q-2025-09-05-1848}
}

@article{polloreno2022robustly,
  title={Robustly decorrelating errors with mixed quantum gates},
  author={Polloreno, Anthony M and Young, Kevin C},
  journal={Quantum Science \& Technology},
  volume={7},
  number={2},
  pages={025004},
  year={2022},
  publisher={IOP Publishing}
}

@article{Hashim2025,
  author  = {Hashim, Akel and Nguyen, Long B. and Goss, Noah and Marinelli, Brian and Naik, Ravi K. and Chistolini, Trevor and Hines, Jordan and Marceaux, J. P. and Kim, Yosep and Gokhale, Pranav and Tomesh, Teague and Chen, Senrui and Jiang, Liang and Ferracin, Samuele and Rudinger, Kenneth and Proctor, Timothy and Young, Kevin C. and Siddiqi, Irfan and Blume-Kohout, Robin},
  title   = {A Practical Introduction to Benchmarking and Characterization of Quantum Computers},
  journal = {PRX Quantum},
  volume  = {6},
  pages   = {030202},
  year    = {2025},
  doi     = {10.1103/PRXQuantum.6.030202}
}

@article{Merkel2013,
  author = {Merkel, Seth T. and Gambetta, Jay M. and Smolin, John A. and Poletto, Stefano and Corcoles, Antonio D. and Johnson, Blake R. and Ryan, Colm A. and Steffen, Matthias},
  title = {Self-consistent quantum process tomography},
  journal = {Phys. Rev. A},
  volume = {87},
  pages = {062119},
  year = {2013},
  doi = {10.1103/PhysRevA.87.062119}
}

@article{BlumeKohout2017,
  author = {Blume-Kohout, Robin and Gamble, John King and Nielsen, Erik and Rudinger, Kenneth and Mizrahi, Jonathan and Fortier, Kevin and Maunz, Peter},
  title = {Demonstration of qubit operations below a rigorous fault tolerance threshold with gate set tomography},
  journal = {Nat. Commun.},
  volume = {8},
  pages = {14485},
  year = {2017},
  doi = {10.1038/ncomms14485}
}

@article{Kelly2014,
  author = {Kelly, J. and Barends, R. and Campbell, B. and Chen, Y. and Chen, Z. and Chiaro, B. and Dunsworth, A. and Fowler, A. G. and Hoi, I.-C. and Jeffrey, E. and others},
  title = {Optimal quantum control using randomized benchmarking},
  journal = {Phys. Rev. Lett.},
  volume = {112},
  pages = {240504},
  year = {2014},
  doi = {10.1103/PhysRevLett.112.240504}
}

@article{Rol2017,
  author = {Rol, M. A. and Bultink, C. C. and O'Brien, T. E. and de Jong, S. R. and Theis, L. S. and Fu, X. and Luthi, F. and Vermeulen, R. F. L. and de Sterke, J. C. and Bruno, A. and others},
  title = {Restless tuneup of high-fidelity qubit gates},
  journal = {Phys. Rev. Applied},
  volume = {7},
  pages = {041001},
  year = {2017},
  doi = {10.1103/PhysRevApplied.7.041001}
}

@article{Erhard2019,
  author = {Erhard, Alexander and Wallman, Joel James and Postler, Lukas and Meth, Michael and Stricker, Roman and Martinez, Esteban Adrian and Schindler, Philipp and Monz, Thomas and Emerson, Joseph and Blatt, Rainer},
  title = {Characterizing large-scale quantum computers via cycle benchmarking},
  journal = {Nat. Commun.},
  volume = {10},
  pages = {5347},
  year = {2019},
  doi = {10.1038/s41467-019-13068-7}
}

@article{Proctor2021,
  author = {Proctor, Timothy and Rudinger, Kenneth and Young, Kevin and Nielsen, Erik and Blume-Kohout, Robin},
  title = {Measuring the capabilities of quantum computers},
  journal = {Nat. Phys.},
  volume = {18},
  pages = {75--79},
  year = {2022},
  doi = {10.1038/s41567-021-01409-7}
}

@article{MagesanInterleaved2012,
  author = {Magesan, Easwar and Gambetta, Jay M. and Johnson, Blake R. and Ryan, Colm A. and Chow, Jerry M. and Merkel, Seth T. and da Silva, Marcus P. and Keefe, George A. and Rothwell, Mary B. and Ohki, Thomas A. and others},
  title = {Efficient measurement of quantum gate error by interleaved randomized benchmarking},
  journal = {Phys. Rev. Lett.},
  volume = {109},
  pages = {080505},
  year = {2012},
  doi = {10.1103/PhysRevLett.109.080505}
}

@article{Gambetta2012Simultaneous,
  author = {Gambetta, Jay M. and Corcoles, Antonio D. and Merkel, Seth T. and Johnson, Blake R. and Smolin, John A. and Chow, Jerry M. and Ryan, Colm A. and Rigetti, Chad and Poletto, Stefano and Ohki, Thomas A. and others},
  title = {Characterization of addressability by simultaneous randomized benchmarking},
  journal = {Phys. Rev. Lett.},
  volume = {109},
  pages = {240504},
  year = {2012},
  doi = {10.1103/PhysRevLett.109.240504}
}

@misc{McKay2020,
  author = {McKay, David C. and Cross, Andrew W. and Wood, Christopher J. and Gambetta, Jay M.},
  title = {Correlated randomized benchmarking},
  year = {2020},
  eprint = {2003.02354},
  archivePrefix = {arXiv},
  primaryClass = {quant-ph}
}

@article{WallmanLoss2015,
  author = {Wallman, Joel J. and Barnhill, Marie and Emerson, Joseph},
  title = {Robust characterization of loss rates},
  journal = {Phys. Rev. Lett.},
  volume = {115},
  pages = {060501},
  year = {2015},
  doi = {10.1103/PhysRevLett.115.060501}
}

@article{Chasseur2015,
  author = {Chasseur, Tobias and Wilhelm, Frank K.},
  title = {Complete randomized benchmarking protocol accounting for leakage errors},
  journal = {Phys. Rev. A},
  volume = {92},
  pages = {042333},
  year = {2015},
  doi = {10.1103/PhysRevA.92.042333}
}

@article{Wood2018,
  author = {Wood, Christopher J. and Gambetta, Jay M.},
  title = {Quantification and characterization of leakage errors},
  journal = {Phys. Rev. A},
  volume = {97},
  pages = {032306},
  year = {2018},
  doi = {10.1103/PhysRevA.97.032306}
}

@article{WallmanCoherence2015,
  author = {Wallman, Joel J. and Granade, Christopher and Harper, Robin and Flammia, Steven T.},
  title = {Estimating the coherence of noise},
  journal = {New J. Phys.},
  volume = {17},
  pages = {113020},
  year = {2015},
  doi = {10.1088/1367-2630/17/11/113020}
}

@article{Feng2016,
  author = {Feng, Guanru and Wallman, Joel J. and Buonacorsi, Brandon and Cho, Franklin H. and Park, Daniel K. and Xin, Tao and Lu, Dawei and Baugh, Jonathan and Laflamme, Raymond},
  title = {Estimating the coherence of noise in quantum control of a solid-state qubit},
  journal = {Phys. Rev. Lett.},
  volume = {117},
  pages = {260501},
  year = {2016},
  doi = {10.1103/PhysRevLett.117.260501}
}

@article{Sheldon2016,
  author = {Sheldon, Sarah and Bishop, Lev S. and Magesan, Easwar and Filipp, Stefan and Chow, Jerry M. and Gambetta, Jay M.},
  title = {Characterizing errors on qubit operations via iterative randomized benchmarking},
  journal = {Phys. Rev. A},
  volume = {93},
  pages = {012301},
  year = {2016},
  doi = {10.1103/PhysRevA.93.012301}
}

@article{CarignanDugas2015,
  author = {Carignan-Dugas, Arnaud and Wallman, Joel J. and Emerson, Joseph},
  title = {Characterizing universal gate sets via dihedral benchmarking},
  journal = {Phys. Rev. A},
  volume = {92},
  pages = {060302},
  year = {2015},
  doi = {10.1103/PhysRevA.92.060302}
}

@article{Cross2016,
  author = {Cross, Andrew W. and Magesan, Easwar and Bishop, Lev S. and Smolin, John A. and Gambetta, Jay M.},
  title = {Scalable randomized benchmarking of non-Clifford gates},
  journal = {npj Quantum Inf.},
  volume = {2},
  pages = {16012},
  year = {2016},
  doi = {10.1038/npjqi.2016.12}
}

@article{BrownEastin2018,
  author = {Brown, Winton G. and Eastin, Bryan},
  title = {Randomized benchmarking with restricted gate sets},
  journal = {Phys. Rev. A},
  volume = {97},
  pages = {062323},
  year = {2018},
  doi = {10.1103/PhysRevA.97.062323}
}

@article{Hashagen2018,
  author = {Hashagen, A. K. and Flammia, Steven T. and Gross, David and Wallman, Joel J.},
  title = {Real randomized benchmarking},
  journal = {Quantum},
  volume = {2},
  pages = {85},
  year = {2018},
  doi = {10.22331/q-2018-08-22-85}
}

@article{Helsen2019,
  author = {Helsen, Jonas and Xue, Xiao and Vandersypen, Lieven M. K. and Wehner, Stephanie},
  title = {A new class of efficient randomized benchmarking protocols},
  journal = {npj Quantum Inf.},
  volume = {5},
  pages = {71},
  year = {2019},
  doi = {10.1038/s41534-019-0182-7}
}

@article{Claes2021,
  author = {Claes, Jahan and Rieffel, Eleanor and Wang, Zhihui},
  title = {Character randomized benchmarking for non-multiplicity-free groups with applications to subspace, leakage, and matchgate randomized benchmarking},
  journal = {PRX Quantum},
  volume = {2},
  pages = {010351},
  year = {2021},
  doi = {10.1103/PRXQuantum.2.010351}
}

@article{Helsen2022General,
  author = {Helsen, J. and Roth, I. and Onorati, E. and Werner, A. H. and Eisert, J.},
  title = {General framework for randomized benchmarking},
  journal = {PRX Quantum},
  volume = {3},
  pages = {020357},
  year = {2022},
  doi = {10.1103/PRXQuantum.3.020357}
}

@article{Helsen2022Matchgate,
  author = {Helsen, Jonas and Nezami, Sepehr and Reagor, Matthew and Walter, Michael},
  title = {Matchgate benchmarking: Scalable benchmarking of a continuous family of many-qubit gates},
  journal = {Quantum},
  volume = {6},
  pages = {657},
  year = {2022},
  doi = {10.22331/q-2022-02-21-657}
}

@article{FrancaHashagen2018,
  author = {Franca, D. S. and Hashagen, A. K.},
  title = {Approximate randomized benchmarking for finite groups},
  journal = {J. Phys. A: Math. Theor.},
  volume = {51},
  pages = {395302},
  year = {2018},
  doi = {10.1088/1751-8121/aad6fa}
}

@article{Hines2023,
  author = {Hines, Jordan and Lu, Marie and Naik, Ravi K. and Hashim, Akel and Ville, Jean-Loup and Mitchell, Brad and Kriekebaum, John Mark and Santiago, David I. and Seritan, Stefan and Nielsen, Erik and others},
  title = {Demonstrating scalable randomized benchmarking of universal gate sets},
  journal = {Phys. Rev. X},
  volume = {13},
  pages = {041030},
  year = {2023},
  doi = {10.1103/PhysRevX.13.041030}
}

@article{Hines2024,
  author = {Hines, Jordan and Hothem, Daniel and Blume-Kohout, Robin and Whaley, Birgitta and Proctor, Timothy},
  title = {Fully scalable randomized benchmarking without motion reversal},
  journal = {PRX Quantum},
  volume = {5},
  pages = {030334},
  year = {2024},
  doi = {10.1103/PRXQuantum.5.030334}
}

@misc{Mayer2021,
  author = {Mayer, Karl and Hall, Alex and Gatterman, Thomas and Halit, Si Khadir and Lee, Kenny and Bohnet, Justin and Gresh, Dan and Hankin, Aaron and Gilmore, Kevin and Gaebler, John},
  title = {Theory of mirror benchmarking and demonstration on a quantum computer},
  year = {2021},
  eprint = {2108.10431},
  archivePrefix = {arXiv},
  primaryClass = {quant-ph},
  doi = {10.48550/arXiv.2108.10431}
}

@misc{Heinrich2022,
  author = {Heinrich, Markus and Kliesch, Martin and Roth, Ingo},
  title = {Randomized benchmarking with random quantum circuits},
  year = {2022},
  eprint = {2212.06181},
  archivePrefix = {arXiv},
  primaryClass = {quant-ph}
}

@article{WallmanFlammia2014,
  author = {Wallman, Joel J. and Flammia, Steven T.},
  title = {Randomized benchmarking with confidence},
  journal = {New J. Phys.},
  volume = {16},
  pages = {103032},
  year = {2014},
  doi = {10.1088/1367-2630/16/10/103032}
}

@article{Epstein2014,
  author = {Epstein, Jeffrey M. and Cross, Andrew W. and Magesan, Easwar and Gambetta, Jay M.},
  title = {Investigating the limits of randomized benchmarking protocols},
  journal = {Phys. Rev. A},
  volume = {89},
  pages = {062321},
  year = {2014},
  doi = {10.1103/PhysRevA.89.062321}
}

@article{Granade2015,
  author = {Granade, Christopher and Ferrie, Christopher and Cory, D. G.},
  title = {Accelerated randomized benchmarking},
  journal = {New J. Phys.},
  volume = {17},
  pages = {013042},
  year = {2015},
  doi = {10.1088/1367-2630/17/1/013042}
}

@misc{Hincks2018,
  author = {Hincks, Ian and Wallman, Joel J. and Ferrie, Chris and Granade, Chris and Cory, David G.},
  title = {Bayesian inference for randomized benchmarking protocols},
  year = {2018},
  eprint = {1802.00401},
  archivePrefix = {arXiv},
  primaryClass = {quant-ph}
}

@article{Harper2019,
  author = {Harper, Robin and Hincks, Ian and Ferrie, Chris and Flammia, Steven T. and Wallman, Joel J.},
  title = {Statistical analysis of randomized benchmarking},
  journal = {Phys. Rev. A},
  volume = {99},
  pages = {052350},
  year = {2019},
  doi = {10.1103/PhysRevA.99.052350}
}

@article{Fogarty2015,
  author = {Fogarty, M. A. and Veldhorst, M. and Harper, R. and Yang, C. H. and Bartlett, S. D. and Flammia, S. T. and Dzurak, A. S.},
  title = {Nonexponential fidelity decay in randomized benchmarking with low-frequency noise},
  journal = {Phys. Rev. A},
  volume = {92},
  pages = {022326},
  year = {2015},
  doi = {10.1103/PhysRevA.92.022326}
}

@article{Mavadia2018,
  author = {Mavadia, S. and Edmunds, C. L. and Hempel, C. and Ball, H. and Roy, F. and Stace, T. M. and Biercuk, M. J.},
  title = {Experimental quantum verification in the presence of temporally correlated noise},
  journal = {npj Quantum Inf.},
  volume = {4},
  pages = {7},
  year = {2018},
  doi = {10.1038/s41534-017-0052-0}
}

@article{Proctor2020,
  author = {Proctor, Timothy and Revelle, Melissa and Nielsen, Erik and Rudinger, Kenneth and Lobser, Daniel and Maunz, Peter and Blume-Kohout, Robin and Young, Kevin},
  title = {Detecting and tracking drift in quantum information processors},
  journal = {Nat. Commun.},
  volume = {11},
  pages = {5396},
  year = {2020},
  doi = {10.1038/s41467-020-19074-4}
}

@article{Merkel2021,
  author = {Merkel, Seth T. and Pritchett, Emily J. and Fong, Bryan H.},
  title = {Randomized benchmarking as convolution: Fourier analysis of gate dependent errors},
  journal = {Quantum},
  volume = {5},
  pages = {581},
  year = {2021},
  doi = {10.22331/q-2021-11-16-581}
}

@article{Emerson2005,
  author = {Emerson, Joseph and Alicki, Robert and Zyczkowski, Karol},
  title = {Scalable noise estimation with random unitary operators},
  journal = {J. Opt. B},
  volume = {7},
  pages = {S347},
  year = {2005},
  doi = {10.1088/1464-4266/7/10/021}
}

@article{Emerson2007,
  author = {Emerson, Joseph and Silva, Marcus and Moussa, Osama and Ryan, Colm and Laforest, Martin and Baugh, Jonathan and Cory, David G. and Laflamme, Raymond},
  title = {Symmetrized characterization of noisy quantum processes},
  journal = {Science},
  volume = {317},
  pages = {1893--1896},
  year = {2007},
  doi = {10.1126/science.1145699}
}

@article{Gross2007,
  author = {Gross, David and Audenaert, Koenraad and Eisert, Jens},
  title = {Evenly distributed unitaries: On the structure of unitary designs},
  journal = {J. Math. Phys.},
  volume = {48},
  pages = {052104},
  year = {2007},
  doi = {10.1063/1.2716992}
}

@article{Dankert2009,
  author = {Dankert, Christoph and Cleve, Richard and Emerson, Joseph and Livine, Etera},
  title = {Exact and approximate unitary 2-designs and their application to fidelity estimation},
  journal = {Phys. Rev. A},
  volume = {80},
  pages = {012304},
  year = {2009},
  doi = {10.1103/PhysRevA.80.012304}
}

@article{BrownFawzi2015,
  author = {Brown, Winton and Fawzi, Omar},
  title = {Decoupling with random quantum circuits},
  journal = {Commun. Math. Phys.},
  volume = {340},
  pages = {867--900},
  year = {2015},
  doi = {10.1007/s00220-015-2470-1}
}

@article{Brandao2016,
  author = {Brandao, Fernando G. S. L. and Harrow, Aram W. and Horodecki, Michal},
  title = {Local random quantum circuits are approximate polynomial-designs},
  journal = {Commun. Math. Phys.},
  volume = {346},
  pages = {397--434},
  year = {2016},
  doi = {10.1007/s00220-016-2706-8}
}

@article{Oszmaniec2022,
  author = {Oszmaniec, Michal and Sawicki, Adam and Horodecki, Michal},
  title = {Epsilon-nets, unitary designs, and random quantum circuits},
  journal = {IEEE Trans. Inf. Theory},
  volume = {68},
  pages = {989--1015},
  year = {2022},
  doi = {10.1109/TIT.2021.3128110}
}

@misc{HunterJones2018,
  author = {Hunter-Jones, Nicholas},
  title = {Operator growth in random quantum circuits with symmetry},
  year = {2018},
  eprint = {1812.08219},
  archivePrefix = {arXiv},
  primaryClass = {cond-mat.stat-mech},
  doi = {10.48550/arXiv.1812.08219}
}

@article{Keyserlingk2018,
  author = {von Keyserlingk, C. W. and Rakovszky, Tibor and Pollmann, Frank and Sondhi, S. L.},
  title = {Operator hydrodynamics, OTOCs, and entanglement growth in systems without conservation laws},
  journal = {Phys. Rev. X},
  volume = {8},
  pages = {021013},
  year = {2018},
  doi = {10.1103/PhysRevX.8.021013}
}

@article{Nahum2018,
  author = {Nahum, Adam and Vijay, Sagar and Haah, Jeongwan},
  title = {Operator spreading in random unitary circuits},
  journal = {Phys. Rev. X},
  volume = {8},
  pages = {021014},
  year = {2018},
  doi = {10.1103/PhysRevX.8.021014}
}

@article{Wallman2016,
  author = {Wallman, Joel J. and Emerson, Joseph},
  title = {Noise tailoring for scalable quantum computation via randomized compiling},
  journal = {Phys. Rev. A},
  volume = {94},
  pages = {052325},
  year = {2016},
  doi = {10.1103/PhysRevA.94.052325}
}

@article{Hashim2021,
  author = {Hashim, Akel and Naik, Ravi K. and Morvan, Alexis and Ville, Jean-Loup and Mitchell, Bradley and Kreikebaum, John Mark and Davis, Marc and Smith, Ethan and Iancu, Costin and O'Brien, Kevin P. and others},
  title = {Randomized compiling for scalable quantum computing on a noisy superconducting quantum processor},
  journal = {Phys. Rev. X},
  volume = {11},
  pages = {041039},
  year = {2021},
  doi = {10.1103/PhysRevX.11.041039}
}

@article{Ware2021,
  author = {Ware, Matthew and Ribeill, Guilhem and Riste, Diego and Ryan, Colm A. and Johnson, Blake and da Silva, Marcus P.},
  title = {Experimental Pauli-frame randomization on a superconducting qubit},
  journal = {Phys. Rev. A},
  volume = {103},
  pages = {042604},
  year = {2021},
  doi = {10.1103/PhysRevA.103.042604}
}

@article{Knill2005,
  author = {Knill, Emanuel},
  title = {Quantum computing with realistically noisy devices},
  journal = {Nature},
  volume = {434},
  pages = {39--44},
  year = {2005},
  doi = {10.1038/nature03350}
}

@article{Nielsen2002,
  author = {Nielsen, Michael A.},
  title = {A simple formula for the average gate fidelity of a quantum dynamical operation},
  journal = {Phys. Lett. A},
  volume = {303},
  pages = {249--252},
  year = {2002},
  doi = {10.1016/S0375-9601(02)01272-0}
}

@article{Choi1975,
  author = {Choi, Man-Duen},
  title = {Completely positive linear maps on complex matrices},
  journal = {Linear Algebra Appl.},
  volume = {10},
  pages = {285--290},
  year = {1975},
  doi = {10.1016/0024-3795(75)90075-0}
}

@article{Jamiolkowski1972,
  author = {Jamiolkowski, Andrzej},
  title = {Linear transformations which preserve trace and positive semidefiniteness of operators},
  journal = {Rep. Math. Phys.},
  volume = {3},
  pages = {275--278},
  year = {1972},
  doi = {10.1016/0034-4877(72)90011-0}
}

@article{Choo2011Jury,
  author = {Choo, Younseok},
  title = {An Elementary Proof of the {Jury} Test for Real Polynomials},
  journal = {Automatica},
  volume = {47},
  number = {1},
  pages = {249--252},
  year = {2011},
  doi = {10.1016/j.automatica.2010.10.040}
}

@article{GonzalezDeMendoza2026,
  author        = {Gonzalez de Mendoza, Juan and Ostrove, Corey and Proctor, Timothy and Young, Kevin and Nielsen, Erik and Blume-Kohout, Robin},
  title         = {A Gauge-Invariant Theory of Small Markovian Errors in Quantum Gate Sets},
  journal       = {arXiv preprint arXiv:2608.14891},
  year          = {2026},
  eprint        = {2608.14891},
  archivePrefix = {arXiv},
  primaryClass  = {quant-ph}
}
\appendix

\section{Direct RB computation}
\label{app:drb-manual}

In this appendix we derive Proposition~\ref{prop:qdrb}. On the traceless Pauli sector, the ideal one-qubit $X_{\pi/2}$ and $Y_{\pi/2}$ gates act by the Bloch-sphere rotations
\begin{equation}
R_X=
\begin{pmatrix}
1&0&0\\
0&0&-1\\
0&1&0
\end{pmatrix},
\qquad
R_Y=
\begin{pmatrix}
0&0&1\\
0&1&0\\
-1&0&0
\end{pmatrix}.
\end{equation}
Write the Hamiltonian post-gate error vectors for the two native gates as
\begin{equation}
\mathbf h_X=(h_X^{(X)},h_Y^{(X)},h_Z^{(X)})^T,
\qquad
\mathbf h_Y=(h_X^{(Y)},h_Y^{(Y)},h_Z^{(Y)})^T,
\end{equation}
and let $[\mathbf h]_\times$ denote the $3\times 3$ cross-product matrix. On the traceless sector, a Hamiltonian post-gate error acts as $e^{A_g}$ with
\begin{equation}
A_X=2[\mathbf h_X]_\times,
\qquad
A_Y=2[\mathbf h_Y]_\times.
\end{equation}
Therefore the exact one-qubit $\Lmat$ matrix is
\begin{equation}
\Lmat(p)=p\,R_X\otimes e^{A_X}R_X + q\,R_Y\otimes e^{A_Y}R_Y.
\label{eq:drb-transfer-appendix}
\end{equation}
At the ideal point $A_X=A_Y=0$, the operator $\Lmat_0(p)$ has a simple unit eigenvalue with normalized eigenvector
\begin{equation}
    v=\frac{\operatorname{vec}(I_3)}{\sqrt 3}.
\end{equation}

\subsection{Ideal mixing}
\label{app:xy-mixing}
For completeness, the required design property can be checked directly. The rotations $R_X$ and $R_Y$ generate the one-qubit Clifford group, which is a unitary 2-design \cite{Dankert2009}. Let $x=pq$, $a=1-3x$, and $b=1-x$. The first-moment block has characteristic polynomial
\begin{equation}
    f_-(\lambda)=\lambda^3-\lambda^2+b\lambda-a,
\end{equation}
and the characteristic polynomial of $\Lmat_0(p)$ on the traceless second-moment sector factors as
\begin{equation}
    (\lambda-1)(\lambda^2-a)f_-(\lambda)f_+(\lambda),\qquad
    f_+(\lambda)=\lambda^3+\lambda^2+b\lambda+a.
\end{equation}
For $0<p<1$, $0<x\leq1/4$. The cubic Jury conditions \cite{Choo2011Jury} for $f_-$ reduce to
\begin{equation}
    a<1,\qquad 2x>0,\qquad 4(1-x)>0,\qquad x(4-9x)>0,
\end{equation}
so every root of $f_-$ lies strictly inside the unit disk. The roots of $f_+$ are their negatives, and the roots of $\lambda^2-a$ have modulus $\sqrt a<1$. Hence the Haar-fixed eigenvalue at one is simple and every non-Haar first- and second-moment eigenvalue has modulus below one. This proves the sequence-asymptotic 2-design property for each fixed $0<p<1$. At $p=0$ or $1$, unit-modulus non-Haar modes remain. Moreover,
\begin{align}
    \Delta_2(p)&=p+O(p^2) && (p\to0),\\
    \Delta_2(p)&=(1-p)+O((1-p)^2) && (p\to1).
    \label{eq:xy-gap-asymptotics}
\end{align}
so the convergence and perturbative neighborhood are not uniform near the endpoints.

For each ideal rotation, both $R_g\otimes R_g$ and its transpose fix
$v$. Thus $v$ is also the normalized left unit eigenvector of
$\Lmat_0(p)$, and $v^\perp$ is invariant. This is the one-qubit
instance of the orthogonal spectral-projector argument in
Sec.~\ref{sec:error-factorization}.

Let $P=\ket{v}\!\bra{v}$ and define the reduced resolvent
\begin{equation}
    S=\left((I-\Lmat_0)\big|_{v^\perp}\right)^{-1}(I-P),
\end{equation}
which is well defined for $0<p<1$ because the ideal gate set mixes on
the orthogonal complement of $v$.

Expanding Eq.~\eqref{eq:drb-transfer-appendix} gives
\begin{equation}
    \Lmat(p)=\Lmat_0+\Lmat_1+\Lmat_2+O(\norm{\mathbf h}^3),
\end{equation}
with
\begin{align}
    \Lmat_1
    &=p\,R_X\otimes A_XR_X
      +q\,R_Y\otimes A_YR_Y,\\
    \Lmat_2
    &=\frac{p}{2}\,R_X\otimes A_X^2R_X
      +\frac{q}{2}\,R_Y\otimes A_Y^2R_Y.
\end{align}
The first-order eigenvalue shift vanishes because
\begin{equation}
    \bra{v}\Lmat_1\ket{v}
    =\frac{p\,\tr A_X+q\,\tr A_Y}{3}=0,
\end{equation}
each $A_g$ being antisymmetric. Using the left eigenvector identified above, standard second-order perturbation theory for the simple eigenvalue at $1$ yields
\begin{equation}
    \gamma
    =
    1+\bra{v}\Lmat_2\ket{v}
    +\bra{v}\Lmat_1S\Lmat_1\ket{v}
    +O(\norm{\mathbf h}^3).
    \label{eq:drb-second-order-appendix}
\end{equation}
Evaluating Eq.~\eqref{eq:drb-second-order-appendix} and collecting coefficients gives the quadratic expansion
\begin{equation}
1-\gamma = \mathbf h^T H_{\mathrm{DRB}}(p)\mathbf h + O(\norm{\mathbf h}^3),
\end{equation}
where
\begin{equation}
\mathbf h=
\begin{pmatrix}
 h_X^{(X)} & h_Y^{(X)} & h_Z^{(X)} & h_X^{(Y)} & h_Y^{(Y)} & h_Z^{(Y)}
\end{pmatrix}^{\!T}
\end{equation}
and, writing $q=1-p$,
\begin{widetext}
\begin{equation}
H_{\mathrm{DRB}}(p)=\frac{4}{3}
\begin{pmatrix}
\displaystyle \frac{p(1+p^2)}{q} & p^2 & p^2 & p^2 & pq & -p^2\\[6pt]
p^2 & pq & pq & pq & q^2 & -pq\\
p^2 & pq & pq & pq & q^2 & -pq\\
p^2 & pq & pq & pq & q^2 & -pq\\
pq & q^2 & q^2 & q^2 & \displaystyle \frac{q(1+q^2)}{p} & -q^2\\[6pt]
-p^2 & -pq & -pq & -pq & -q^2 & pq
\end{pmatrix}.
\label{eq:drb-full-hessian}
\end{equation}
\end{widetext}
The three linearized gauge vectors generated by Eq.~\eqref{eq:lin-gauge} are
\begin{equation}
\begin{aligned}
    g_x &= (0,0,0,1,0,1)^T,\\
    g_y &= (0,1,-1,0,0,0)^T,\\
    g_z &= (0,1,1,-1,0,1)^T,
\end{aligned}
\end{equation}
and a direct check shows that
\begin{equation}
    H_{\mathrm{DRB}}(p)g_x=H_{\mathrm{DRB}}(p)g_y=H_{\mathrm{DRB}}(p)g_z=0.
\end{equation}
Together with positive definiteness of the reduced matrix in Eq.~\eqref{eq:Qdrb}, this shows that the full quadratic-form matrix has rank three and kernel equal to the gauge subspace.

To pass to \FOGI coordinates, choose the representatives
\begin{equation}
\begin{aligned}
f_1=(1,0,0,0,0,0)^T,
\\
f_2=(0,0,0,0,1,0)^T,
\\
f_3=(0,1,1,0,0,0)^T,
\end{aligned}
\end{equation}
which satisfy $F_i(f_j)=\delta_{ij}$ for the \FOGI coordinates of Eqs.~\eqref{eq:F1}--\eqref{eq:F3}. Because the gauge vectors lie in the kernel of the full quadratic-form matrix, the reduced quadratic form is independent of the chosen representatives, and its matrix is simply
\begin{equation}
\bigl(f_i^T H_{\mathrm{DRB}}(p) f_j\bigr)_{i,j=1}^3.
\end{equation}
Evaluating these matrix elements yields Eq.~\eqref{eq:Qdrb}.

\subsection{Eigenvalues of the reduced quadratic form}
\label{app:drb-spectrum}

Figure~\ref{fig:drb_eigs} shows the ordered eigenvalues of
$\Qmat_{\mathrm{DRB}}(p)$ on a logarithmic vertical scale.
As $p\to0$ or $1$, the largest coefficient diverges while the other
two vanish. This describes the nonuniform conditioning of the local
quadratic expansion as the ideal mixing gap closes; it should not
be interpreted as a divergent exact error rate.

\begin{figure}[t]
\centering
\includegraphics[width=0.76\linewidth]{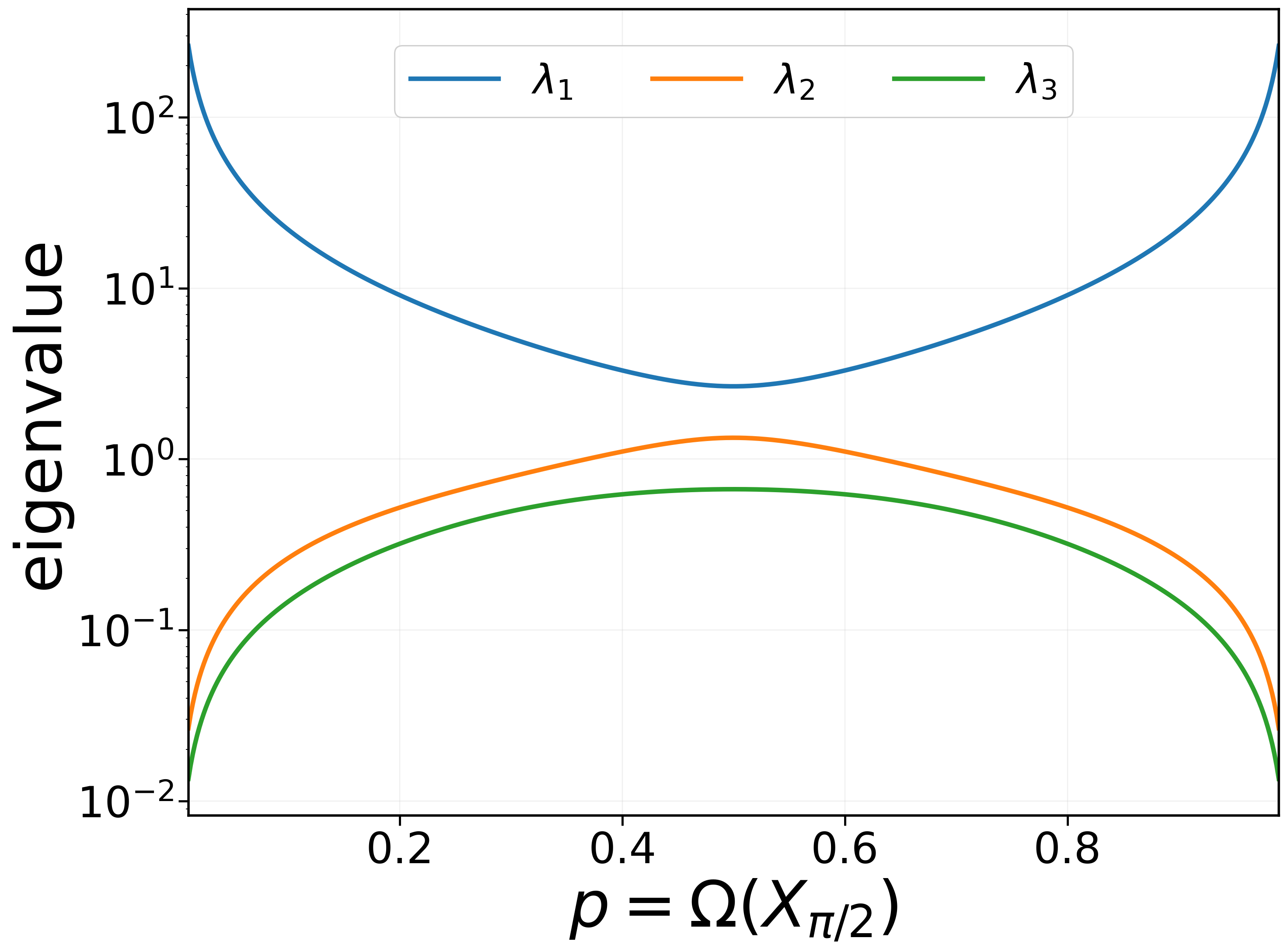}
\caption{Ordered eigenvalues of the reduced FOGI quadratic form $\Qmat_{\mathrm{DRB}}(p)$ in Eq.~\eqref{eq:Qdrb}, shown on a logarithmic vertical scale and truncated to $p\in[0.01,0.99]$. The largest coefficient diverges as the ideal moment gap closes at either endpoint, showing that the fixed-$p$ perturbative
expansion is nonuniform. The exact RB rate remains bounded, and the other two coefficients vanish because their modes require both gates.}
\label{fig:drb_eigs}
\end{figure}

\section{Direct perturbative derivation and symmetry interpretation of the CRB quadratic form}
\label{app:crb-manual}

In this appendix we derive Eq.~\eqref{eq:crb-quadratic-form} directly from the $\Lmat$ matrix. Using the notation of Sec.~\ref{sec:crb}, expand
\begin{equation}
\begin{aligned}
\mathscr L(\mathbf h)&=\mathscr L_0+\mathscr L_1+\mathscr L_2+O(\|\mathbf h\|^3),\\
\mathscr L_0&=\frac1{24}\sum_C R_C\otimes R_C,\\
\mathscr L_1&=\frac1{24}\sum_C R_C\otimes B_C R_C,\\
\mathscr L_2&=\frac1{48}\sum_C R_C\otimes B_C^2 R_C.
\end{aligned}
\end{equation}
Because the real three-dimensional representation of the one-qubit Clifford group is irreducible, the exact ideal twirl on the traceless Pauli sector is the rank-one projector
\begin{equation}
\mathscr L_0=P=\ket{v}\!\bra{v},
\qquad
v=\operatorname{vec}(I_3)/\sqrt3.
\end{equation}
Hence the reduced resolvent on the orthogonal complement is simply $I-P$. Since each $B_C$ is antisymmetric, $\bra{v}\mathscr L_1\ket{v}=0$, so second-order perturbation theory yields
\begin{equation}
\gamma(\mathbf h)
=
1+\bra{v}\mathscr L_2\ket{v}+\bra{v}\mathscr L_1(I-P)\mathscr L_1\ket{v}+O(\|\mathbf h\|^3).
\label{eq:crb-second-order-appendix}
\end{equation}

For the $\mathscr L_2$ term, use $(B\otimes C)\operatorname{vec}(X)=\operatorname{vec}(CXB^{\mathsf T})$ to obtain
\[
\bra{v}(R_C\otimes B_C^2R_C)\ket{v}=\frac13\tr(B_C^2).
\]
Therefore
\[
\bra{v}\mathscr L_2\ket{v}=\frac1{144}\sum_C \tr(B_C^2).
\]
Since $B_C=2[\mathbf h_C]_\times$ and $\tr([\mathbf x]_\times^2)=-2\|\mathbf x\|^2$, we get
\begin{equation}
\bra{v}\mathscr L_2\ket{v}=-\frac1{18}\sum_{C\in C_1}\|\mathbf h_C\|^2.
\label{eq:L2-crb-appendix}
\end{equation}

For the $\mathscr L_1(I-P)\mathscr L_1$ term, identifying matrices with their vectorizations, Eq.~\eqref{eq:u-w-main} implies
\[
\mathscr L_1(I_3)=\frac1{24}\sum_C B_C=\frac1{12}[\mathbf u(\mathbf h)]_\times,
\]
while the Frobenius adjoint satisfies
\[
\mathscr L_1^{\dagger}(I_3)
=-\frac1{24}\sum_C R_C^{\mathsf T}B_C R_C
=-\frac1{12}[\mathbf w(\mathbf h)]_\times.
\]
Both matrices are antisymmetric and therefore orthogonal to $I_3$, so $(I-P)$ acts trivially on them. Hence
\[
\bra{v}\mathscr L_1(I-P)\mathscr L_1\ket{v}
=\frac13\,\bigl\langle \mathscr L_1^{\dagger}(I_3),\mathscr L_1(I_3)\bigr\rangle_F.
\]
Using $\langle [\mathbf x]_\times,[\mathbf y]_\times\rangle_F=2\,\mathbf x\cdot\mathbf y$, we obtain
\begin{equation}
\bra{v}\mathscr L_1(I-P)\mathscr L_1\ket{v}
=-\frac1{216}\,\mathbf u(\mathbf h)\cdot\mathbf w(\mathbf h).
\label{eq:L1-crb-appendix}
\end{equation}

Combining Eqs.~\eqref{eq:crb-second-order-appendix}--\eqref{eq:L1-crb-appendix} gives
\begin{align}
1-\gamma(\mathbf h)
&=\frac1{18}\sum_{C\in C_1}\|\mathbf h_C\|^2
+\frac1{216}\,\mathbf u(\mathbf h)\cdot\mathbf w(\mathbf h)\\
&\quad+O(\|\mathbf h\|^3).
\end{align}
The displayed degree-two coefficient is $Q_{\mathrm{CRB}}^{(2)}$ in Eq.~\eqref{eq:crb-quadratic-form}.

\subsection{Symmetry interpretation}

With the metric of Eq.~\eqref{eq:crb-metric}, Eq.~\eqref{eq:crb-quadratic-form} organizes the coherent CRB quadratic form into three invariant subspaces. The orthogonal decomposition depends on this metric, while the gauge range and quotient are canonical \cite{GonzalezDeMendoza2026}. Define the Clifford action on $V=(\mathbb R^3)^{24}$ by
\[
(T_S\mathbf h)_C = R_S\,\mathbf h_{S^{-1}CS},
\qquad S\in C_1,
\]
and define
\[
(g_{\mathbf a})_C=\mathbf a-R_C\mathbf a,
\qquad
(d_{\mathbf a})_C=\mathbf a+R_C\mathbf a.
\]
Then
\begin{align}
V_{\rm gauge}&=\{g_{\mathbf a}:\mathbf a\in\mathbb R^3\},\\
V_{\rm dg}&=\{d_{\mathbf a}:\mathbf a\in\mathbb R^3\},\\
V_\perp&=\{\mathbf h: \mathbf u(\mathbf h)=0,\ \mathbf w(\mathbf h)=0\}.
\end{align}
Because $\sum_{C\in C_1}R_C=0$,
\begin{align}
\mathbf u(g_{\mathbf a})&=24\,\mathbf a, &
\mathbf w(g_{\mathbf a})&=-24\,\mathbf a,\\
\mathbf u(d_{\mathbf a})&=24\,\mathbf a, &
\mathbf w(d_{\mathbf a})&=24\,\mathbf a.
\end{align}
so $\dim V_{\rm gauge}=\dim V_{\rm dg}=3$. Moreover,
\[
\langle \mathbf h,d_{\mathbf a}\rangle
=
\mathbf a\cdot(\mathbf u(\mathbf h)+\mathbf w(\mathbf h)),
\qquad
\langle \mathbf h,g_{\mathbf a}\rangle
=
\mathbf a\cdot(\mathbf u(\mathbf h)-\mathbf w(\mathbf h)),
\]
which shows that $V_\perp$ is orthogonal to both $V_{\rm gauge}$ and $V_{\rm dg}$. Also,
\[
\langle d_{\mathbf a},g_{\mathbf b}\rangle
=
\mathbf b\cdot(\mathbf u(d_{\mathbf a})-\mathbf w(d_{\mathbf a}))=0,
\]
so the three subspaces are mutually orthogonal. Since $V=(\mathbb R^3)^{24}$ has dimension $72$, we obtain the orthogonal decomposition
\[
V=V_{\rm gauge}\oplus V_{\rm dg}\oplus V_\perp,
\qquad
\dim V_\perp=66.
\]

The gauge subspace is the tangent space to an exact gauge orbit. Indeed, for any $\mathbf a\in\mathbb R^3$, let $M(t)=e^{2t[\mathbf a]_\times}$ and define
\[
\widetilde R_C(t)=M(t)R_CM(t)^{-1}.
\]
This gate set is exactly gauge-equivalent to the ideal one for every $t$, so its CRB decay is identically ideal. Refactoring into post-gate form gives tangent vector $g_{\mathbf a}$ at $t=0$, and therefore the quadratic form vanishes on $V_{\rm gauge}$. Equivalently, substituting $g_{\mathbf a}$ into Eq.~\eqref{eq:crb-quadratic-form} gives
\[
\sum_C \|g_{\mathbf a,C}\|^2=48\|\mathbf a\|^2,
\qquad
\mathbf u(g_{\mathbf a})\cdot\mathbf w(g_{\mathbf a})=-576\|\mathbf a\|^2,
\]
so $Q_{\mathrm{CRB}}^{(2)}(g_{\mathbf a})=0$.

On $V_{\rm dg}$, the same identities give
\[
\sum_C \|d_{\mathbf a,C}\|^2=48\|\mathbf a\|^2,
\qquad
\mathbf u(d_{\mathbf a})\cdot\mathbf w(d_{\mathbf a})=576\|\mathbf a\|^2,
\]
and therefore
\[
Q_{\mathrm{CRB}}^{(2)}(d_{\mathbf a})
=
\frac{48}{18}\|\mathbf a\|^2+
\frac{576}{216}\|\mathbf a\|^2
=
\frac{16}{3}\|\mathbf a\|^2
=
\frac19\|d_{\mathbf a}\|^2.
\]
Finally, if $\mathbf h\in V_\perp$, then $\mathbf u(\mathbf h)=\mathbf w(\mathbf h)=0$, so
\[
Q_{\mathrm{CRB}}^{(2)}(\mathbf h)=\frac1{18}\|\mathbf h\|^2.
\]
Thus
\[
D_{\rm CRB}=\frac19\,\Pi_{\rm dg}+\frac1{18}\,\Pi_\perp,
\qquad
\ker D_{\rm CRB}=V_{\rm gauge},
\]
which is Eq.~\eqref{eq:Dcrb-main}.

As a normalization check, consider a gate-independent coherent post-gate error, $\mathbf h_C=\mathbf a$ for every $C\in C_1$. Since $\sum_C R_C=0$, we have
$\mathbf u(\mathbf h)=24\mathbf a$ and $\mathbf w(\mathbf h)=0$. Equation~\eqref{eq:crb-quadratic-form}
therefore gives
\begin{equation}
    Q_{\mathrm{CRB}}^{(2)}(\mathbf h)
    =\frac{24}{18}\|\mathbf a\|^2
    =\frac43\|\mathbf a\|^2.
\end{equation}
The quadratic prediction for the RB error rate is consequently $\frac34 Q_{\mathrm{CRB}}^{(2)}(\mathbf h)=\|\mathbf a\|^2$. This agrees with the leading term of the entanglement infidelity of the common coherent error,
\begin{equation}
    \sin^2\|\mathbf a\|
    =\|\mathbf a\|^2+O(\|\mathbf a\|^4),
\end{equation}
recovering the standard gate-independent RB result to the order considered here.
\end{document}